\newtheorem{theorem}{Theorem}
\newtheorem{lemma}{Lemma}
\newtheorem{defi}{Definition}
\newcommand{\ba}{\begin{eqnarray}}
\newcommand{\ea}{\end{eqnarray}}
\newcommand{\ban}{\begin{eqnarray*}}
\newcommand{\ean}{\end{eqnarray*}}
\newcommand{\be}{\begin{equation}}
\newcommand{\ee}{\end{equation}}
\def\cH{{\cal H}}
\def\sc{\mathop{\rm sc}}
\def\gcm{\mathop{\rm gcm}}
\def\Label#1{\label{#1}\ [\ \text{#1}\ ]\ }
\def\Label{\label}
\begin{document}

\title{Measurement-Device-Independent Detection of Beyond-Quantum State}

\author{Baichu Yu}
\email{yubc@sustech.edu.cn}
\affiliation{Shenzhen Institute for Quantum Science and Engineering, 
Southern University of Science and Technology, Nanshan District, Shenzhen, 518055, China}
\affiliation{International Quantum Academy (SIQA), Shenzhen 518048, China}

\author{Masahito Hayashi}
\email{hmasahito@cuhk.edu.cn}
\affiliation{School of Data Science, The Chinese University of Hong Kong, Shenzhen, Longgang District, Shenzhen, 518172, China}
\affiliation{International Quantum Academy (SIQA), Shenzhen 518048, China}
\affiliation{Graduate School of Mathematics, Nagoya University, Nagoya, 464-8602, Japan}

\date{}

\begin{abstract}
In quantum theory, a quantum state on a composite system 
of two parties
realizes a non-negative probability with 
any measurement element with a tensor product form.
However, there also exist non-quantum states which satisfy the above condition.
Such states are called beyond-quantum states, and cannot be detected by standard Bell tests. 
To distinguish a beyond-quantum state from quantum states,
we propose a measurement-device-independent (MDI) test for beyond-quantum state detection, which is composed of 
quantum input states on respective parties and
quantum measurements across the input system and the target system on respective parties. 
The performance of our protocol is independent of 
the forms of the tested states and the measurement operators, which provides an advantage in practical scenarios.
We also discuss the importance of
tomographic completeness of the input sets to the detection.
\end{abstract}
%\begin{keyword}
%\end{keyword}

\maketitle

\section{introduction}
Bell test is a very important type of protocol to study the theoretical and practical problems of quantum information. Standard Bell test contains two spatially separated parties, each of which are given some classical inputs and required to generate corresponding classical outputs. The result of the standard Bell test only depends on the generated input-output statistics, which are referred to as correlations, and no knowledge about the experimental setups or the physical theory is required. Such a property is called device-independent (DI). The DI property of Bell test makes it a convenient scenario for studying theories which could be more general than quantum mechanics. Some of these theories are proposed based on physical principles \cite{linden2007quantum,pawlowski2009information,navascues2010glance}, some of them are based on more general mathematical frameworks \cite{navascues2015almost,barnum2010local,arai2023detection}. 

General Probabilistic Theory (GPT) \cite{plavala2023general} is a theory equipped with general states and measurements, which together produce probability distribution for experimental outcomes. GPT requires only two weak assumptions: the independence between the choice of state and the choice of measurement, and the independent and identically distributed (i.i.d) property of data generated by repeating experiment \cite{grabowecky2022experimentally}, therefore it includes quantum theory. Due to its generality, GPT is used to discuss the mathematical structure \cite{arai2023pseudo} and information processing properties \cite{barrett2007information,dall2017no} of quantum theory. In particular, there are models which possess the same local structure as quantum theory, but has different global structure \cite{stevens2014steering,arai2019perfect,aubrun2022entanglement,arai2023pseudo}. 

Then a natural question would be whether we can identify quantum mechanics with the correlations generated in Bell-type DI scenarios. The negative answer was shown by the existence of a family of GPT states which is more general than quantum states, but produces only quantum correlations in DI Bell tests \cite{barnum2010local,arai2023detection}. Such states are locally quantum and generate valid probability distribution under local (separable) measurements, therefore are called positive over all pure tensors (POPT) states \cite{klay1987tensor,barnum2005influence}. The set of POPT states includes the set of quantum states, and also some non-quantum states which have negative eigenvalues. 
In the following, we refer to such non-quantum POPT states as beyond-quantum states. 
Briefly speaking, DI Bell tests can not detect beyond-quantum states 
on the composite system of $A$ and $B$
due to the following reasons.
Any beyond-quantum state $\rho_{AB}^{nq}$  can be 
obtained by performing a positive trace-preserving map 
$\Gamma_A$
on $A$ to a quantum state $\rho_{AB}^{q}$.
The dual map $\Gamma_A^*$ of the positive map $\Gamma_A$ is also a positive map, which turns a 
Positive-Operator-Valued Measurement (POVM) element
$M^A_a$ on $A$ 
into another POVM element $\Gamma_A^*(M^A_a)$ on $A$. 
Therefore, the correlations produced by 
the pair of a beyond-quantum state $\rho_{AB}^{nq}$ 
and a local measurement $\{M^A_a\otimes M^B_b\}_{a,b}$ 
can be simulated by the correlations produced by 
the pair of the above quantum state $\rho_{AB}^{q}$
and a local measurement $\{\Gamma_A^*(M^A_a)\otimes M^B_b\}_{a,b}$ .
That is, if the measurement operators can not be certified,
the correlation produced by measuring a beyond quantum
state cannot be distinguished from the correlation
produced by measuring a quantum state. 

In order to detect/identify beyond-quantum states with local measurements, we need some protocol which is more restrictive on the measurement operators than DI Bell test. A straightforward idea is to consider device-dependent (DD) protocol, where the form of the measurement operators are known \cite{arai2023detection}. 
However, we do not always have so much knowledge on measurement devices, e.g.,
sometimes the measurement devices can not be trusted. In those cases, the soundness of the result can not be guaranteed. Therefore we consider a type of an intermediate protocol between DI and DD ones, 
called a measurement-device-independent (MDI) protocol. 
The key property of an MDI protocol is that it transfers the trust on measurement devices into the trust on state preparations, such that the soundness of the protocol is independent of the knowledge of measurement operators. Consequently, MDI protocols are widely used in tasks such as quantum key distribution (QKD) \cite{lo2012measurement,liu2013experimental,curty2014finite} and entanglement detection \cite{branciard2013measurement,rosset2018practical,guo2020measurement}. 

The aim of this paper is 
to propose an MDI protocol for beyond-quantum state detection, which is composed of 
quantum input states on respective parties and
quantum measurements across the input system and the target system on respective parties as Fig. \ref{fig1}. 
To address the requirement for our protocol, 
%we prepare the following three concepts.
we will first discuss the following three criteria about the performance of the beyond-quantum state detection protocol, which we name as completeness, universal completeness and soundness, 
while the words completeness and soundness were originally introduced in the context of 
the verification \cite{hayashi2018self,li2023robust}.
Based on these concepts, we will propose our MDI detection protocol 
for beyond-quantum state.

\begin{figure}[t]
    \centering
  \includegraphics[width=1\linewidth]{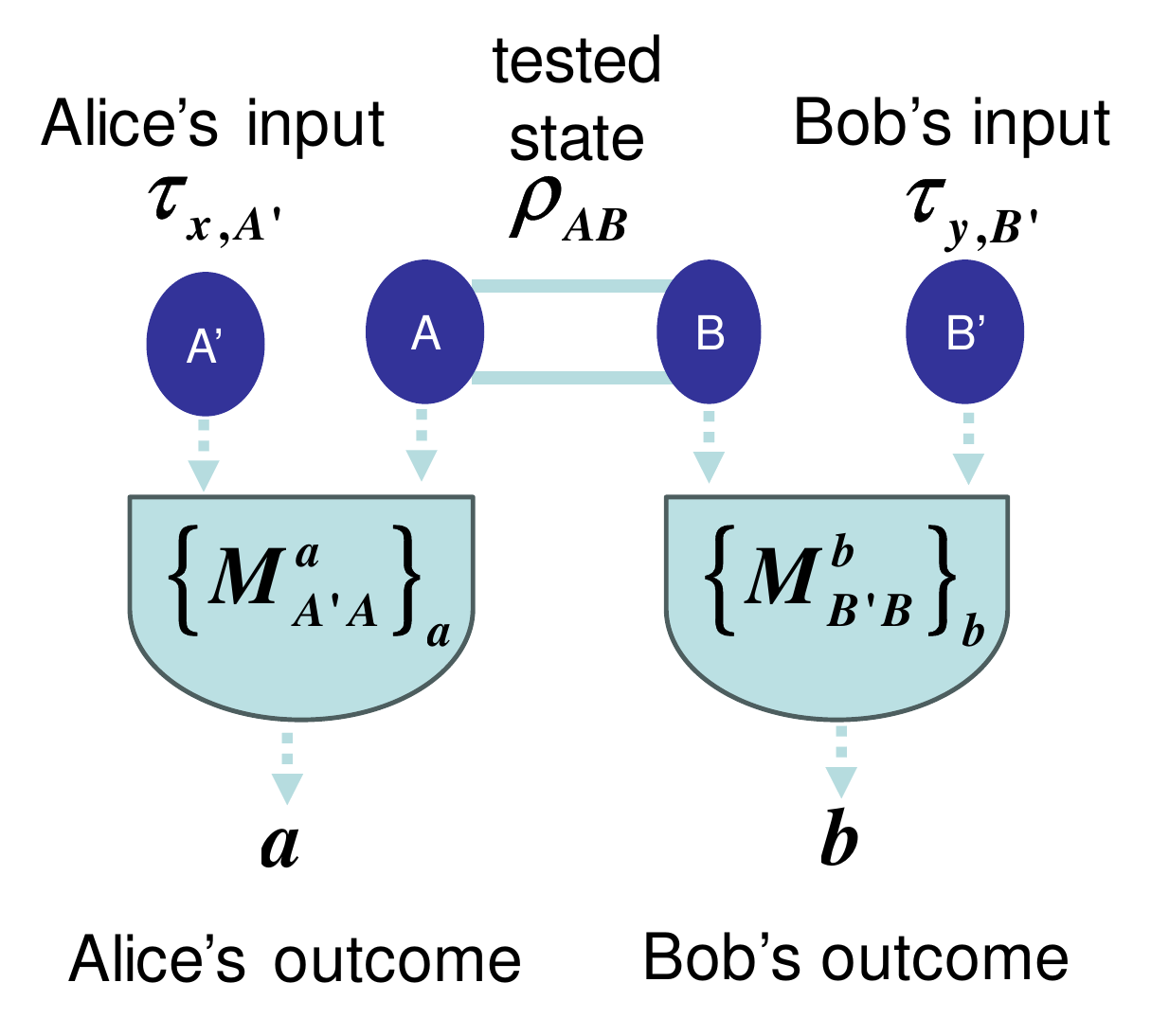}
  \caption{MDI protocol.
An MDI protocol is composed of 
the target state $\rho_{AB}$,
quantum input states $\{\tau_{x,A'}\}_x$, $\{\tau_{y,B'}\}_y$, and
quantum measurements 
$M_{A'A}=\{M_{A'A}^{a}\}_a$ and $ M_{BB'}=\{M_{BB'}^{b}\}_b$
across the input system and the target system
in respective parties.  
As the result of this experiment,
the MDI protocol generates the correlation set 
$\{p(a,b|\tau_{x,A'},\tau_{y,B'})\}$. 
The details of these notations will be given in Section \ref{S4}. 
}
  \label{fig1}
  \end{figure}

\begin{defi}
\emph{Completeness}: 
Let $S$ be a set beyond-quantum states.
A protocol is called complete for the set $S$
when 
any beyond-quantum state in $S$ can be detected by the protocol 
under the assumption that 
the experimental devices are properly chosen.
\end{defi}

\begin{defi}
\emph{Universal completeness}: 
A protocol is called universally complete 
when it is complete for the set of
all beyond-quantum states (with certain fixed local dimensions).
\end{defi}

\begin{defi}
\emph{Soundness}: Any quantum state will never be detected as a beyond-quantum state by the protocol.
\end{defi}

The remaining of this paper is organized as follows.
Section \ref{S2} discusses the relation between our protocol and existing protocols.
Section \ref{S3} gives definitions of
a beyond-quantum state, a MDI protocol, and its special cases.
Section \ref{S4} presents our protocol.
Section \ref{S5} derives a necessarily and sufficient condition for
our protocol being universally complete.

\section{Relation with existing protocols}\label{S2}
First, the studies
\cite{branciard2013measurement,buscemi2012all,abiuso2021measurement}
introduced a MDI Bell test 
as a type of a generalized Bell test
by replacing the classical inputs of the standard Bell test by quantum input states. 
MDI Bell test is not fully DI since it assumes quantum theory locally and requires the knowledge (trust) on the quantum input states. The additional assumptions provides a MDI Bell test some advantages over DI Bell test, e.g. it does not suffer the locality or detection loophole \cite{branciard2013measurement,rosset2018practical}. Moreover, it certifies certain properties of measurement operators (such as entanglement). 

In a recent work, it was shown that a MDI Bell test can be used to detect beyond-quantum states \cite{lobo2022certifying}. More specifically, the authors proved that for any given beyond-quantum state, it is possible to construct a MDI witness, which is a linear function of the correlations 
$\{p(a,b|\tau_{x,A'},\tau_{y,B'})\}_{a,b,x,y}$. The construction of the witness is dependent on the form of the tested state and the measurement operators. However, in a practical MDI Bell test, the form of the tested state or measurement operators is usually not assumed to be known. In that case, the proper witness for detection can not be determined, which may greatly influence the detectability (the ability of detecting a given beyond-quantum state) of the protocol. 

\if0
\begin{figure}[h]
    \centering
  \includegraphics[width=1\linewidth]{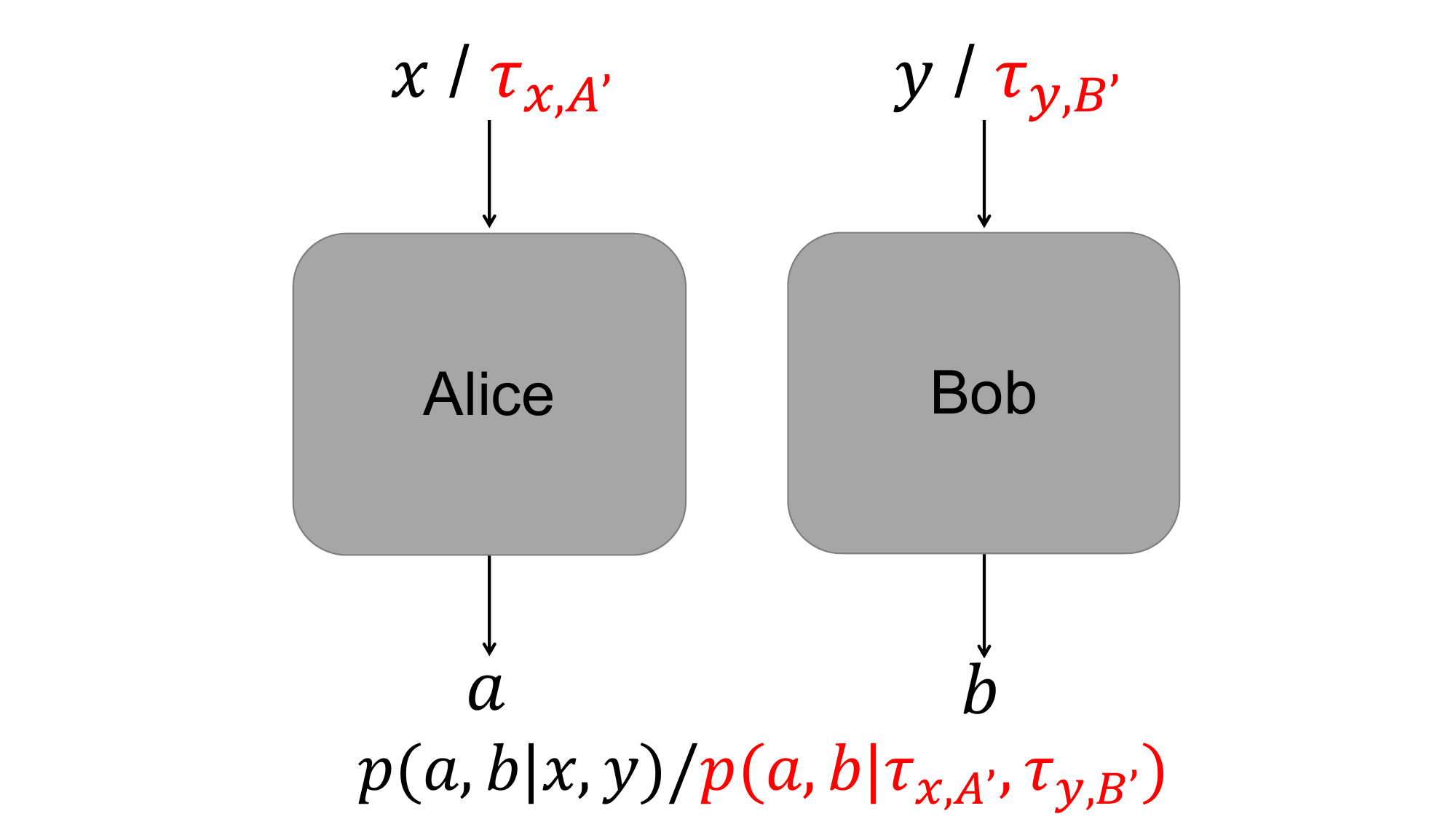}
  \caption{Difference between DI and MDI Bell tests.
  In a DI Bell test, the experimenters (treated as black boxes) take into classical inputs $x,y$ and generate outputs $a,b$, while in a MDI Bell test (with color red) they take into quantum states $\tau_{x,A'},\tau_{y,B'}$, where $A'$, $B'$ denotes the auxiliary systems on Alice's and Bob's side respectively. The DI and MDI Bell tests generate correlation sets $\{p(a,b|x,y)\}$ and $\{p(a,b|\tau_{x,A'},\tau_{y,B'})\}$ respectively. }
  \label{fig:results}
  \end{figure}
\fi

The difference between our protocol and that in \cite{lobo2022certifying} lies in the method of processing the experimental correlations. 
In our protocol, the detectability depends only on the experimental correlations, but not the forms of the tested state and the measurement operators. Such property provides our protocol an advantage when  
the tested state and the measurement operators are unknown.
Here, we examine whether the existing detection protocols for beyond-quantum state satisfy the above criteria, and what assumptions they require. 
For the device-dependent protocol proposed in Ref.~\cite{arai2023detection}, 
the knowledge of the forms of the measurement operators
is required to satisfy soundness, and
the form of the tested state is additionally required to satisfy completeness. 
For the MDI protocol proposed in Ref.~\cite{lobo2022certifying}, 
the knowledge of the form of quantum input states is required to satisfy soundness,
and 
the knowledge of the forms of the tested state and the measurement operators is additionally required to satisfy completeness.
For our protocol, while the knowledge of the form of quantum inputs
is also required to satisfy soundness, 
only the knowledge of the local dimensions of the tested state is additionally required to satisfy completeness. Also, since our protocol satisfies completeness without the knowledge of the tested state and the measurement operators, it satisfies universal completeness. In Table~\ref{table}, we summarize the above differences in a table.

\begin{table}[t]
\resizebox{\linewidth}{!}{
\begin{tabular}{|c|c|c|}
\hline &Soundness& Completeness
\\ \hline
\multirow{3}{*}{Protocol in Ref.\cite{arai2023detection}} 
& forms of
& \multirow{2}{*}{+ form of} 
\\
& measurement 
& \multirow{2}{*}{the tested state}
\\
& operators
& 
\\ \hline
\multirow{3}{*}{Protocol in Ref.\cite{lobo2022certifying}} 
& forms of
& + forms of
\\ 
& quantum
& the tested state and 
\\ 
& input states 
& measurement operators
\\ 
\hline
\multirow{3}{*}{Our protocol}
& forms of 
&\multirow{2}{*}{+ dimension of} 
\\ 
&quantum  
&\multirow{2}{*}{the local systems}
\\ 
&input states 
&
\\ \hline
\end{tabular}
}
\caption{Requirement to satisfy the two criteria,
soundness and completeness. 
In the table, we summarize the knowledge needed to satisfy soundness and completeness for the three protocols. The ``+" sign means the condition is additionally required besides the condition for soundness. It can be seen that our protocol satisfies the two criteria with relatively weaker assumptions.}
%\label{fig3}
\label{table}
\end{table}

\section{Beyond-quantum state and MDI protocol}\label{S3}
At first we briefly introduce the setting of a GPT model in the form used in Ref.~\cite{arai2023detection}. 

\begin{defi}
A model of GPT is defined by a tuple $G=(\mathcal{V},\langle,\rangle,\mathcal{C},u)$, where $(\mathcal{V},\langle,\rangle),\mathcal{C},u$ are a real-vector space with inner product, a proper cone, and an order unit of $C^{*}$ respectively.
\end{defi}

\begin{defi}
 The space $S(G)$ given the GPT model $G=(\mathcal{V},\langle,\rangle,\mathcal{C},u)$ is defined as
\begin{equation}
S(G):=\{\rho\in \mathcal{V}|\langle\rho,u\rangle=1 \}.
\end{equation}
\end{defi}

\begin{defi}
A measurements is a family $\{M_{i}\}$ satisfying $M_{i}\in C^{*}$ and $\sum_{i}M_{i}=I$. Here $i$ is called the outcome of a measurement, and the probability of obtaining outcome $i$ is $p_{i}=\langle\rho,M_{i}\rangle$. 
\end{defi}

In this work we consider a less general but very practical GPT model, where each measurement operator can be written as a Hermitian matrix, or equivalently, $dim \mathcal{V}=d^2$ for some integer $d>1$, then each state can be written as a Hermitian matrix of unit trace, and the inner product takes the form of trace \cite{gleason1975measures,busch2003quantum,arai2023derivation}. Also, in this work, 
our scenario has two spatially separated experimenters, Alice and Bob, who can only perform local measurements. Let $\mathcal{L}_{h}(\mathcal{H}_{S})$ denote the set of Hermitian matrices on a finite dimensional Hilbert space $\mathcal{H}_{S}$, sometimes we also simply write it as $\mathcal{L}_{S}$ and write $\mathcal{L}_{h}(\mathcal{H}_{S}\otimes\mathcal{H}_{V})$ as $\mathcal{L}_{SV}$. Let $\mathcal{L}^{+}$ denote the positive semi-definite matrices among the corresponding set $\mathcal{L}$. Then the GPT model we consider in this work can be written as $G=(\mathcal{L}_{AB},\textrm{Tr}, SEP(A:B)^{*},I)$, where $SEP(A:B)^{*}:=(\mathcal{L}_{A}^{+}\otimes\mathcal{L}_{B}^{+})^{*}$ is the cone of (unnormalized) POPT states.

Now we formally introduce a MDI protocol and a MDI witness as its special case.
Their aim is to distinguish
two sets:
the set $\mathcal{Q}_{AB}$ of quantum states in $\mathcal{L}_{AB}$ 
and
the set $\mathcal{S}_{AB}$ of states in $SEP(A:B)^*$, i.e.,
to discriminate 
an element of $\mathcal{S}_{AB}\setminus \mathcal{Q}_{AB}$
from the set $\mathcal{Q}_{AB}$.

Similar to entanglement detection \cite{branciard2013measurement}, 
a MDI protocol 
for a joint state $\rho_{AB}\in \mathcal{Q}_{AB}$ across Alice's and Bob's systems is given as follows.
In each round, Alice and Bob are given one of the quantum states in the sets 
$\{\tau_{x,A'}\}_x$, $\{\tau_{y,B'}\}_y$ on 
auxiliary systems $A'$ and $B'$, respectively.
They make quantum measurements 
$M_{A'A}=\{M_{A'A}^{a}\}_a$ and $ M_{BB'}=\{M_{BB'}^{b}\}_b$ on system $A'A$ and $BB'$ respectively, obtaining outcomes $a$ and $b$. The generated correlations is obtained by
\begin{equation}\Label{correlations}
p(a,b|\tau_{x,A'},\tau_{y,B'})=\textrm{Tr} (\tau_{x,A'}\otimes \rho_{AB}\otimes \tau_{y,B'}\cdot M_{A'A}^{a}\otimes M_{BB'}^{b}).
\end{equation}
A MDI protocol outputs a certain outcome depending only on 
the correlations $P:=\{p(a,b|\tau_{x,A'},\tau_{y,B'})\}_{a,b,x,y}$.
During the process, Alice and Bob is allowed to communicate classically, and they can also post-process their measurement correlations. These actions do not affect the soundness of the detection.

A MDI witness is a special type of a MDI protocol,
and is constructed by using the correlations as below:
\begin{equation}\label{MDIwit}
\mathcal{W}_{\beta}(P)
=\sum_{x,y,a,b} \beta_{x,y,a,b}p(a,b|\tau_{x,A'},\tau_{y,B'}),
\end{equation}
where $\beta=(\beta_{x,y,a,b})_{x,y,a,b}$ 
are some real coefficients dependent on $x,y,a,b$. When the input set $\mathcal{I}:=\{\tau_{x,A'}\otimes\tau_{y,B'}\}_{x,y}$ forms a tomographically complete set on $\mathcal{L}_{AB}$, 
a MDI witness can correspond to an entanglement witness as follow: for any entanglement witness $W\in\mathcal{L}_{AB}$, it can be spanned using $\tau_{x,A}:=\tau_{x,A'}^{\mathrm{T}}$, $\tau_{y,B}:=\tau_{y,B'}^{\mathrm{T}}$ as 
\begin{equation}\label{cdoe}
    W=\sum_{x,y}\alpha_{x,y} \tau_{x,A}\otimes \tau_{y,B}
\end{equation}
where $\alpha_{xy}$ are some real coefficients. 
To recover the above entanglement witness $W$ from 
a MDI witness, we fix 
one element $a_0$ among Alice's outcomes
and one element $b_0$ among Bob's outcomes,
and choose the coefficients $\beta_{x,y,a,b}$ to be
\begin{equation}\label{ccoe}
  \beta_{x,y,a,b}:=\left\{
  \begin{array}{ll}
      \alpha_{x,y}, &\hbox{when } a=a_0,b=b_0\\
      0, &\hbox{otherwise.}
                       \end{array} 
                       \right.
\end{equation}
Then, we have $\mathcal{W}_{\beta}(P)\geq 0$ 
when the tested state $\rho_{AB}$ is separable. Also, when measurement operators take the form $M_{A'A}^{a_0}=\ketbra{\Phi_{A'A}}{\Phi_{A'A}}$, where 
$\ket{\Phi_{A'A}}:=\frac{1}{\sqrt{d_{A}}}\sum_{i=1}^{d_{A}}\ket{i}\ket{i}$ 
is the maximally entangled state in $\mathcal{H}_{A'}\otimes\mathcal{H}_{A}$, and analogously $M_{BB'}^{b_0}=\ketbra{\Phi_{BB'}}{\Phi_{BB'}}$, we have 
$\mathcal{W}_{\beta}(P)=\frac{1}{d_{A}d_{B}}\Tr (W\rho_{AB})$ \cite{branciard2013measurement}. Therefore entanglement is detected if 
$\mathcal{W}_{\beta}(P)<0$.

In Ref.~\cite{lobo2022certifying}, the authors showed that MDI witness can also be used to detect beyond-quanutm states. The central point is that there exists a positive semi-definite matrix $Y$ for every beyond-quantum states $\rho_{AB}\in\mathcal{S}_{AB}$, such that 
\begin{align}\label{witnessY}
\begin{split}
\textrm{Tr}\rho_{AB} Y& <0, \\
\textrm{Tr}\sigma_{AB} Y& \ge 0
\hbox{ for } \forall \sigma_{AB}\in \mathcal{Q}_{AB}. 
\end{split}
\end{align}
We will refer to such matrices $Y$ as the beyond-quantumness witness (or simply witness if it does not cause confusion) of the beyond-quantum state $\rho_{AB}$ hereinafter. By substituting the entanglement witness $W$ in the above discussion with the beyond-quantumness witness $Y$ and constructing the coefficients $\beta_{x,y,a,b}$ analogously, the MDI entanglement witness in \eqref{MDIwit}  
becomes a MDI beyond-quantum state witness. Ref.~\cite{lobo2022certifying} reveals the strong potential MDI protocol has on detecting beyond-quantum states, but there are still problems to address when we come to a more practical scenario. More specifically, to witness a beyond-quantum state with \eqref{MDIwit}, the choice of suitable coefficients $\beta_{x,y,a,b}$ is dependent on the knowledge of the form of the tested state $\rho_{AB}$ and the measurement operators $M_{A'A}^{a}$ and $ M_{BB'}^{b}$. 

To see this dependence, for example, we focus on a two-qubit system
with two Bell states
$\ket{\Phi_{+}}:=\frac{1}{\sqrt{2}}(\ket{00}+\ket{11})$,
$\ket{\Psi_{-}}:=\frac{1}{\sqrt{2}}(\ket{01}-\ket{10})$
and the partial transposition 
$\Gamma_{A}:=\textrm{T}_{A}\otimes I_{B}$
on subsystem $A$, where
$\textrm{T}_{A}$ expresses the transposition
on subsystem $A$.
The witness $Y_{0}=\ketbra{\Psi_{-}}{\Psi_{-}}$ 
of $\rho_{0}:=\Gamma_{A}(\ketbra{\Phi_{+}}{\Phi_{+}})$
can be recovered by a MDI witness as follows.
When we choose $\beta_{x,y,a,b}$ as \eqref{ccoe} with $\alpha_{x,y}$ satisfying
\begin{equation}
    Y_{0}=\sum_{x,y}\alpha_{x,y} \tau_{x,A}\otimes \tau_{y,B}
\end{equation}
and define the correlations $P=\{p(a,b|\tau_{x,A'},\tau_{y,B'})\}_{a,b,x,y}$ by \eqref{correlations}
with the measurement operators $M_{A'A}^{a_0}=M_{BB'}^{b_0}=\ketbra{\Phi_{+}}{\Phi_{+}}$,
our MDI witness $\mathcal{W}_{\beta}$
satisfies 
$\mathcal{W}_{\beta}(P)=\frac{1}{4}\Tr (Y_{0}\rho_{0})
$,
and
therefore detects $\rho_{0}$.
However, under the same setting (input set and measurement operators), 
the same MDI witness $\mathcal{W}_{\beta}$
%constructed for $\rho_{0}$ 
can not detect the beyond-quantum state $\rho_{1}:=\Gamma_{A}(\ketbra{\Psi_{-}}{\Psi_{-}})$.
In practical MDI protocols, it is more common that we only know the local dimensions of $\rho_{AB}$ instead of their exact forms. Therefore, more advanced method is needed to address this problem.

Inspired by a previous work \cite{rosset2018practical}, we propose a method which processes the correlations obtained in a MDI detection experiment using Semidefinite Programming (SDP). We will show that this method satisfies universal completeness with only the knowledge about the dimensions of the local systems, and its detection is not influenced by the knowledge of the states and measurement operators. 

\section{MDI protocol with SDP}\label{S4}
An important observation which inspires our approach is that, although the knowledge of the tested state and the measurement operators are assumed to be known when constructing a MDI witness, neither of them is actually necessary. By processing Eq.~\eqref{correlations} we obtain the relation
\begin{equation}\Label{piab0}
\begin{split}
p(a,b|\tau_{x,A'},\tau_{y,B'})
=\Tr_{A'B'}(\tau_{x,A'}\otimes\tau_{y,B'} \Pi_{ab}),\ \forall a,b,x,y,
\end{split}
\end{equation}
where
\begin{equation}\Label{piab}
\Pi_{ab}=\Tr_{AB}(I_{A'}\otimes \rho_{AB}\otimes I_{B'} M^{a}_{A'A}\otimes M^{b}_{BB'})
\end{equation}
is some operator corresponding to $a,b$ in $\mathcal{L}_{A'B'}$. Eq.~\eqref{piab0} allows us to reduce the detection of beyond-quantumness onto a set $\{\Pi_{ab}\}_{a,b}$ of operators, 
where each $\Pi_{ab}$ could be understood as an effective state (unnormalized), and $\tau_{x,A'}$ and $\tau_{y,B'}$ serve as effective measurement operators. It can be obtained from Eq.~\eqref{piab} that $\Pi_{ab}\geq 0$ when $\rho_{AB}\geq 0$. Therefore if any $\Pi_{ab}$ is found to be non-positive semi-definite, the tested state is beyond-quantum.  

When the form of the tested state or the measurement operators are not known, 
the form of $\Pi_{ab}$ are also not known. In such a case, we propose 
an SDP for beyond-quantum state detection as below:\\
\begin{equation}\Label{SDP}
\begin{aligned}
\text{minimise:}\ &c_{ab}\\
\text{such that:}\ &c_{ab}=\text{Tr}X_{ab}^-\\
                &X_{ab}=X_{ab}^+-X_{ab}^-\\
                 &  X_{ab}^+\geq 0, X_{ab}^-\geq 0,\\
                 &  \textrm{Tr}(X_{ab}\tau_{x,A'}\otimes\tau_{y,B'})=p(a,b|\tau_{x,A'},\tau_{y,B'}), \forall x,y.
\end{aligned}
\end{equation}
SDP \eqref{SDP} aims to reconstruct the least negative operators $X_{ab}\in\mathcal{L}_{A'B'}$ (with respect to the sum of negative eigenvalues) which can generate the experimental correlations via Eq.~\eqref{piab0}. It is not difficult to see that $c_{ab}>0$ if and only if the projection of $\Pi_{ab}$ onto the subspace spanned by $\{\tau_{x}\otimes\tau_{y}\}_{x,y}$ 
is non-positive semi-definite. Therefore if there exists a pair of outcomes $(a,b)$ such that $c_{ab}>0$, we conclude that the tested state is beyond-quantum. 
In SDP \eqref{SDP}, only the knowledge of the form of inputs $\tau_{x,A'}$ and $\tau_{y,B'}$ is used together with the correlations to obtain the result
because $c_{ab}$ is given as a function of inputs $\tau_{x,A'}$, $\tau_{y,B'}$, and
the observed correlation $p(a,b|\tau_{x,A'},\tau_{y,B'})$.
The detectability of our method is independent of the knowledge of the tested state and the measurement operators since it exhausts all possible operators which can generate the correlations. So our method is also optimal in the sense that if it does not detect the beyond-quantumness with given correlations, no other method can detect it.

Now we prove that when the sets of quantum inputs 
$\{\tau_{x,A'}\}_x$ and $\{\tau_{y,B'}\}_y$ are well chosen and known, \eqref{SDP} satisfies soundness. 
Without loss of generality,
we can always assume that the dimension of the auxiliary systems is the same as the corresponding local systems, i.e., $d_{A}=d_{A'}$ and $d_{B}=d_{B'}$.

\begin{theorem}\label{Thm1}
When 
$M_{A'A}$ and $ M_{BB'}$ 
are quantum measurements 
on system $A'A$ and $BB'$, 
and
$\{\tau_{x,A'}\}_x$ and $\{\tau_{y,B'}\}_y$ are chosen to be sets of states in $\mathcal{H}^{d_{A'}}$ and $\mathcal{H}^{d_{B'}}$, where $d_{A'},d_{B'}$ are the dimensions of the auxiliary systems $A'$ and $B'$, the tested state is beyond-quantum whenever we obtain $c_{ab}>0$ for some outcomes $a,b$ with SDP \eqref{SDP}. 
\end{theorem}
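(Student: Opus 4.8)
The plan is to prove the contrapositive of soundness: I will show that whenever the target state is a genuine quantum state, i.e.\ $\rho_{AB}\ge 0$, the optimum of SDP \eqref{SDP} vanishes, $c_{ab}=0$ for every outcome pair $(a,b)$. This is exactly the statement that obtaining some $c_{ab}>0$ forces $\rho_{AB}$ to be beyond-quantum. The argument rests on two ingredients: the positivity of the effective operators $\Pi_{ab}$ of \eqref{piab} under the stated hypotheses, and the fact that each $\Pi_{ab}$ is itself a feasible point of \eqref{SDP}.

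First I would establish the key lemma: if $\rho_{AB}\ge 0$ and the POVM elements $M_{A'A}^{a},M_{BB'}^{b}$ are positive semidefinite (which is precisely what ``quantum measurements'' means), then $\Pi_{ab}\ge 0$. Since $\Pi_{ab}$ depends linearly on $\rho_{AB}$, a spectral decomposition $\rho_{AB}=\sum_{k}\lambda_{k}\ketbra{\phi_{k}}{\phi_{k}}$ with $\lambda_{k}\ge 0$ reduces the claim to the case of a single pure state $\rho_{AB}=\ketbra{\phi}{\phi}$ with $\ket{\phi}\in\mathcal{H}_{A}\otimes\mathcal{H}_{B}$. For an arbitrary $\ket{\psi}\in\mathcal{H}_{A'}\otimes\mathcal{H}_{B'}$ I would evaluate the quadratic form $\bra{\psi}\Pi_{ab}\ket{\psi}$. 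The operators $\ketbra{\psi}{\psi}$ (on $A'B'$) and $\ketbra{\phi}{\phi}$ (on $AB$) act on disjoint tensor factors, hence commute, and their product is the rank-one projector onto $\ket{\Omega}:=\ket{\psi}_{A'B'}\otimes\ket{\phi}_{AB}$ after reordering the factors to $A'ABB'$. Pulling the partial trace over $AB$ through by cyclicity then gives
\begin{equation*}
\bra{\psi}\Pi_{ab}\ket{\psi}=\bra{\Omega}\,M_{A'A}^{a}\otimes M_{BB'}^{b}\,\ket{\Omega}\ge 0,
\end{equation*}
since a tensor product of positive operators is positive. As $\ket{\psi}$ was arbitrary, $\Pi_{ab}\ge 0$.

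Next I would close the loop with the SDP. By \eqref{piab0} the operator $\Pi_{ab}$ reproduces the observed correlations exactly, $\Tr_{A'B'}(\tau_{x,A'}\otimes\tau_{y,B'}\,\Pi_{ab})=p(a,b|\tau_{x,A'},\tau_{y,B'})$ for all $x,y$, so $X_{ab}=\Pi_{ab}$ satisfies the linear constraints of \eqref{SDP}. Because $\Pi_{ab}\ge 0$, the choice $X_{ab}^{+}=\Pi_{ab}$, $X_{ab}^{-}=0$ is feasible and attains objective value $\Tr X_{ab}^{-}=0$. Since the objective $\Tr X_{ab}^{-}$ is the trace of a positive operator and is therefore bounded below by $0$, the minimum is exactly $c_{ab}=0$. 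Thus every genuine quantum state yields $c_{ab}=0$ for all $(a,b)$, which is soundness in contrapositive form.

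The main obstacle is the positivity lemma. One must handle the bookkeeping of the four tensor factors $A',A,B,B'$ carefully and recognize that, although $\Pi_{ab}$ is a partial trace of a product of operators (an operation that does not preserve positivity in general), the specific ``sandwich'' structure of \eqref{piab}---with the positive state inserted between two positive POVM elements supported on complementary factors---does preserve positivity. This is exactly the step where the hypothesis that $M_{A'A}$ and $M_{BB'}$ are genuine quantum measurements is indispensable: were the measurement operators allowed to be non-positive, $\Pi_{ab}$ could fail to be positive even for a quantum $\rho_{AB}$, and \eqref{SDP} could then spuriously report $c_{ab}>0$, breaking soundness. I note that tomographic completeness of the input set is not needed here, since $\Pi_{ab}$ is a zero-negativity feasible point regardless of how few constraints the inputs impose; that issue belongs to completeness rather than soundness.
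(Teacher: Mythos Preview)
Your proposal is correct and follows essentially the same approach as the paper: show that for a quantum $\rho_{AB}$ the effective operator $\Pi_{ab}$ is positive semidefinite and hence furnishes a feasible point of the SDP with $X_{ab}^{-}=0$, forcing $c_{ab}=0$. The only difference is that you spell out the positivity of $\Pi_{ab}$ in detail (via spectral decomposition and the quadratic form), whereas the paper simply asserts it; your argument is slightly more elaborate than necessary, since $\langle\psi|\Pi_{ab}|\psi\rangle=\Tr\big[(\ketbra{\psi}{\psi}_{A'B'}\otimes\rho_{AB})(M^{a}_{A'A}\otimes M^{b}_{BB'})\big]\ge 0$ already holds for mixed $\rho_{AB}$ without first passing to pure states.
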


\emph{Proof.} 
If the tested state $\rho_{AB}$ is a quantum state, any effective states $\Pi_{ab}$ is positive semi-definite. Since the constraints of SDP \eqref{SDP} can always be satisfied by taking $X_{ab}=\Pi_{ab}$, the SDP will always output $c_{ab}=0$ in such case.   
This indicates that $c_{ab}>0$ only if $\rho_{AB}$ is beyond quantum. \qed

Therefore, whenever the form of the quantum input states $\{\tau_{x,A'}\}_x$ and $\{\tau_{y,B'}\}_y$ are known, 
our detection protocol satisfies soundness.
Note that in the proof of Theorem 1, we assume that Alice and Bob are honest: they do not communicate, nor do they process their local systems or the experimental correlations. 
Such an assumption does not affect the generality, since allowing local operation and classical communication (LOCC) between Alice and Bob will not produce false detection when the tested state is quantum. We leave the proof in the appendix \ref{appendixA}.

Now we show that SDP \eqref{SDP} satisfies universal completeness with the knowledge of the dimension of the local systems.

\begin{theorem}\label{Thm2}
Given the assumption that the dimensions of $A$ and $B$ are $d_A$ and $d_B$ respectively, any beyond-quantum state $\rho_{AB}$ can be detected by our MDI protocol 
when
$\{\tau_{x,A'}\}_x$ and $\{\tau_{y,B'}\}_y$ are tomographically complete sets of states on $\mathcal{H}^{d_A}$ and $\mathcal{H}^{d_B}$, and measurement operators $M^{a}_{A'A}$ and $M^{b}_{BB'}$ are entangled pure, with Schmidt rank $d_A$ and $d_B$ respectively. 
\end{theorem}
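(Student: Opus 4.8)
The plan is to reduce the claim, via the equivalence already noted in the paper (that $c_{ab}>0$ precisely when the projection of $\Pi_{ab}$ onto $\mathrm{span}\{\tau_{x,A'}\otimes\tau_{y,B'}\}_{x,y}$ fails to be positive semidefinite), to a statement purely about the operators $\Pi_{ab}$ defined in \eqref{piab}. Because $\{\tau_{x,A'}\}_x$ and $\{\tau_{y,B'}\}_y$ are assumed tomographically complete on $\mathcal{H}^{d_A}$ and $\mathcal{H}^{d_B}$, their tensor products span all of $\mathcal{L}_{A'B'}$, so the relevant projection is the identity map. Hence it suffices to exhibit a single outcome pair $(a_0,b_0)$ for which $\Pi_{a_0 b_0}$ has a negative eigenvalue. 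It is exactly tomographic completeness that validates this reduction, so this is where that assumption enters.

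Next I would compute $\Pi_{a_0 b_0}$ explicitly for a rank-one element $M^{a_0}_{A'A}=\ketbra{\psi}{\psi}$ whose Schmidt rank is $d_A$, and analogously $M^{b_0}_{BB'}=\ketbra{\chi}{\chi}$ of Schmidt rank $d_B$, an outcome pair guaranteed by the hypothesis. Writing $\ket{\psi}=\sum_{ij}\psi_{ij}\ket{i}_{A'}\ket{j}_A$ and regarding $(\psi_{ij})$ as a matrix $\psi\colon\mathcal{H}_A\to\mathcal{H}_{A'}$ (and $\chi\colon\mathcal{H}_B\to\mathcal{H}_{B'}$ likewise), a direct index computation of the partial trace in \eqref{piab} gives
\[
\Pi_{a_0 b_0}=(\psi\otimes\chi)\,\rho_{AB}^{T_A T_B}\,(\psi\otimes\chi)^{\dagger},
\]
an operator on $A'B'$, where $T_A,T_B$ are the transpositions on $A,B$. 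For the maximally entangled choice $\psi,\chi\propto I$ this reduces to the partial-transpose relation already implicit in the MDI-witness calculation of Section \ref{S3}. The crucial structural point is that Schmidt rank $d_A$ (resp. $d_B$) means precisely that the square matrix $\psi$ (resp. $\chi$) is invertible, so $\psi\otimes\chi$ is invertible.

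I would then finish with two standard facts. First, the full transpose $\rho_{AB}^{T_A T_B}=\rho_{AB}^{T}$ has the same characteristic polynomial as $\rho_{AB}$ and hence the same spectrum; since $\rho_{AB}$ is beyond-quantum it has a negative eigenvalue, and therefore so does $\rho_{AB}^{T}$. Second, the displayed formula exhibits $\Pi_{a_0 b_0}$ as a congruence of $\rho_{AB}^{T}$ by the invertible operator $\psi\otimes\chi$, so by Sylvester's law of inertia it has the same number of negative eigenvalues as $\rho_{AB}^{T}$, namely at least one. Thus $\Pi_{a_0 b_0}\not\ge 0$, which by the reduction of the first paragraph yields $c_{a_0 b_0}>0$, so the SDP \eqref{SDP} reports beyond-quantumness and the protocol detects $\rho_{AB}$.

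The main obstacle I expect is the bookkeeping in the partial-trace computation that produces the congruence formula, in particular getting the conjugate and transpose placements right; the conceptual content, however, lies entirely in recognizing that full Schmidt rank is equivalent to invertibility of $\psi$ and $\chi$, so that the physically natural map $\rho_{AB}\mapsto\Pi_{a_0 b_0}$ is an invertible congruence and hence inertia-preserving. It is worth stressing why both hypotheses are used essentially: if the Schmidt rank were deficient, $\psi\otimes\chi$ would be singular and could annihilate the negative part of $\rho_{AB}^{T}$, and if the inputs were not tomographically complete, projecting onto their proper subspace could restore positivity. In either case some beyond-quantum state would escape detection, which is consistent with the paper's emphasis on the role of tomographic completeness.
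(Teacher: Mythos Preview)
Your proposal is correct and follows essentially the same approach as the paper: both arguments reduce via tomographic completeness to showing that $\Pi_{ab}$ fails to be positive semidefinite, then express $\Pi_{ab}$ as a congruence of $\rho_{AB}^{T}$ by an invertible operator (your matrix $\psi\otimes\chi$ is exactly the paper's $X_a U_{A'}^a\otimes X_b U_{B'}^b$), using that full Schmidt rank is equivalent to invertibility. Your invocation of Sylvester's law of inertia is simply the name for the step the paper phrases as ``$\Pi_{ab}\ge 0$ if and only if $\rho_{AB}\ge 0$,'' and your direct passage to the general full-rank case is a mild streamlining of the paper's two-stage (first Bell, then general) computation.
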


Theorem \ref{Thm2} shows that our protocol satisfies universal completeness with the knowledge of the form of quantum input states $\{\tau_{x,A'}\}_x$ and $\{\tau_{y,B'}\}_y$ and the dimension of local systems.

\emph{Proof.} We will first show that when the measurement operators $M^{a}_{A'A}$ and $M^{b}_{BB'}$ 
are chosen to have the form of generalized Bell measurement operators, the eigenvalues of $\Pi_{ab}$ will be proportional to the eigenvalues of $\rho_{AB}$. 
Generalized Bell measurement (GBM) operators has the form $M^{a}_{A'A}=\ketbra{\Lambda_a}{\Lambda_a}$, where $\ket{\Lambda_a}=U_{A'}^{a}\otimes I_{A} \ket{\Phi_{A'A}}$, $U_{A'}^{a}$ is some unitary matrix on $\cal{H}_{A'}$, and $\ket{\Phi_{A'A}}=\frac{1}{\sqrt{d_A}}\sum_{i}^{d_A}\ket{i}\ket{i}$ is the maximally entangled state. Then we have
\begin{equation}\Label{maxent}
M^{a}_{A'A}=\frac{1}{d_{A}}\sum_{i,j=1}^{d_A}U_{A'}^{a}\ketbra{i}{j}U_{A'}^{a\dagger}\otimes\ketbra{i}{j},
\end{equation}
and $M^{b}_{BB'}$ can be obtained analogously. Since any state $\rho_{AB}$ can be written in the same local basis as
\begin{equation}\Label{staterho}
\rho_{AB}=\sum_{i,j,k,l=1}^{d}a_{i,j,k,l}\ketbra{i}{j}\otimes\ketbra{k}{l}.
\end{equation}
Substituting Eqs.~\eqref{maxent} and \eqref{staterho} into Eq.~\eqref{piab}, we obtain the relation
\begin{equation}\Label{piab1}
\Pi_{ab}=\frac{1}{d_{A'}d_{B'}}U_{A'}^{a}\otimes U_{B'}^{b}\rho_{AB}^{T}U_{A'}^{a\dagger}\otimes U_{B'}^{b\dagger}
\end{equation}
which indicates that $\Pi_{ab}$ has the eigenvalues proportional to the eigenvalues of $\rho_{AB}$ (note that $\rho_{AB}^{T}$ here is a density matrix in $\mathcal{L}_{A'B'}$). 

Now we consider the case where the measurement operators take the form of more general pure states. Let the measurement operator $M_{A'A}^{a}=\ketbra{\psi_{a}}{\psi_{a}}$, where $\ket{\psi_{a}}\in\mathcal{H}_{A'}^{d}\otimes\mathcal{H}_{A}^{d}$ is an arbitrary state with Schmidt rank $d_{A}$. Since $\ket{\psi_{a}}$ can be written in Schmidt decomposition form as
\begin{equation}
    \ket{\psi_{a}}=\sum_{i=1}^{d_A}a_{i} \ket{u_{i}}_{A'}\ket{v_i}_{A},
\end{equation}
it is always possible to find a proper generalized Bell state $\ket{\Lambda_a}_{\psi}=\frac{1}{\sqrt{d_{A}}}\sum_{i}^{d_{A}}\ket{u_{i}}_{A'}\ket{v_i}_{A}$ and a positive semi-definite matrix $X_{a}\in\mathcal{L}_{A'}$ with the form
\begin{equation}
  X_{a}=x_{i}\ket{u_{i}}_{A'}\bra{u_{i}}_{A'},  
\end{equation}
where $x_{i}=\sqrt{d_{A}}a_{i}$, such that
\begin{equation}
  \ket{\psi_{a}}=X_{a}\otimes I_{A}\ket{\Lambda_a}_{\psi},  
\end{equation}
Since $\ket{\psi_a}_{\psi}$ has Schmidt rank $d_{A}$, that is, all $a_{i}$ are nonzero, $X_{a}$ is invertible. Analogously we can obtain invertible positive semi-definite matrix $X_{b}$ for a measurement operator $M_{BB'}^{b}$ which has Schmidt rank $d_{B}$. Then we have
\begin{equation}\Label{piab2}
\Pi_{ab}=X_{a}U_{A'}^{a}\otimes X_{b}U_{B'}^{b} \rho_{AB}^{T}U_{A'}^{a\dagger}X_{a}^{\dagger}\otimes U_{B'}^{b\dagger}X_{b}^{\dagger},
\end{equation}
where $U_{A'}^{a}$ is the local unitary that transforms a maximally entangled state $\ket{\Phi_{A'A}}=\frac{1}{\sqrt{d_A}}\sum_{i}^{d_A}\ket{v_i}\ket{v_i}$ into generalized Bell state $\ket{\Lambda_a}_{\psi}$, i.e. $\ket{\Lambda_a}_{\psi}=U_{A'}^{a}\otimes I_{A}\ket{\Phi_{A'A}}$. $U_{B'}^{b}$ is defined analogously.
In such a case, $\Pi_{ab}\geq 0$ if and only if $\rho_{AB}\geq 0$.

Since the reconstructed $X_{AB}$ will be exactly the same as $\Pi_{ab}$ when $\{\tau_{x,A'}\}_x$ and 
$\{\tau_{y,B'}\}_y$ are tomographically complete sets. When $\rho_{AB}$ is beyond-quantum, $\Pi_{ab}$ has a negative eigenvalue, which means that $X_{ab}^-$ must have positive eigenvalues, and we will obtain $c_{ab}> 0$.\qed

A natural question would be: is it always necessary to use tomographically complete input sets to attain universal completeness? We will discuss this question in the next section.

\section{Necessary condition for universal completeness}\label{S5}
In this section we show that it is necessary to use tomographically complete input sets on the composite system ${\cal H}_A\otimes {\cal H}_B$
to attain universal completeness 
when the local dimensions of Alice and Bob is the same $(d_{A}=d_{B})$. 
The local dimension is denoted by $d$. 
We will construct a special subset $S$ of beyond-quantum states,
for which the protocol is complete if and only if the input sets is tomographically complete. That is, it is impossible to detect all states in $S$ unless the input set is tomographically complete.

In our protocol, instead of directly dealing with the tested state, we witness the beyond-quantumness of the effective states $\Pi_{ab}$ with certain input set is $\{\tau_{x,A'}\otimes\tau_{y,B'}\}_{x,y}$ using SDP \eqref{SDP}. Therefore the minimum number of input states needed to detect a beyond-quantum state is determined by the minimum number of input states needed to detect the negativity of at least one effect state. More formally, 
we say that 
an input set $\{\tau_{x,A'}\otimes\tau_{y,B'}\}_{x,y}$ detects
a beyond-quantum state $\rho_{AB}$ 
when there exist two positive semi-definite operators 
$M^a_{A'A}\in\mathcal{L}_{A'A}$ and $M^b_{BB'}\in\mathcal{L}_{BB'}$ 
such that the effective state $\Pi_{ab}$ generated by Eq.~\eqref{piab} satisfies
the following.
(i) $\Pi_{ab}$ is a (unnormalized) beyond-quantum state. 
(ii) There exists a witness $Y$ of $\Pi_{ab}$, i.e., 
a Hermitian matrix $Y$ satisfying \eqref{witnessY},
with the form 
\begin{equation}\label{Ycondition}
 Y=\sum_{x,y} \alpha_{xy} \tau_{x,A'}\otimes \tau_{y,B'}
\end{equation}
with certain coefficients $\alpha_{xy}$. 
Conversely, a beyond-quantum state $\rho_{AB}$ can not be detected if 
no effective state $\Pi_{ab}$ generated by $\rho_{AB}$ 
satisfies the above two conditions.
For the simplicity of the following discussion, we first define two terminologies.

\begin{defi}
A family $S$ of beyond-quantum states is called demanding 
when the following condition holds: 
an input set $\{\tau_{x,A'}\otimes\tau_{y,B'}\}_{x,y}$ 
detects any state $\rho_{AB} \in S$
if and only if the input set $\{\tau_{x,A'}\otimes\tau_{y,B'}\}_{x,y}$ 
is tomographically complete. 
%A set ${\cal D}$ of beyond-quantum states is called set demanding %when the input set $\{\tau_{x,A'}\otimes\tau_{y,B'}\}$ needs to be %tomographically complete to detect all states in the set.     
\end{defi}

\begin{defi}
A family $S$ of beyond-quantum states
is called witness-demanding if the following condition holds: 
For any element $\rho_{AB}\in S$, there exists
a witness $Y$ of $\rho_{AB}$ to satisfy Eq.~\eqref{Ycondition} 
if and only if the input set $\{\tau_{x,A}\otimes\tau_{y,B}\}_{x,y}$ is tomographically complete. 
%A set ${\cal D}$ of beyond-quantum states is called set demanding %when the input set $\{\tau_{x,A'}\otimes\tau_{y,B'}\}$ needs to be %tomographically complete to detect all states in the set.     
\end{defi}

We discuss the relation between demanding and witness-demanding as following.

\begin{theorem}\Label{Th5}
If a family $S$ of beyond-quantum states 
is witness-demanding, it is also demanding.
\end{theorem}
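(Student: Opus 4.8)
The plan is to isolate a single bridging lemma relating the operational notion of \emph{detection} to the algebraic notion of a witness contained in the input span, and then to run the two biconditional definitions through it. A preliminary simplification I would record is that a beyond-quantumness witness in the sense of \eqref{witnessY} is simply a positive semi-definite $Y$ with $\mathrm{Tr}(\rho_{AB}Y)<0$: the requirement $\mathrm{Tr}(\sigma_{AB}Y)\ge 0$ for every $\sigma_{AB}\in\mathcal{Q}_{AB}$ is equivalent to $Y\ge 0$, since the quantum states exhaust the positive cone. Both condition (ii) in the definition of detection and the defining property of being witness-demanding then read as the existence of a positive semi-definite element of the relevant input span with negative overlap---with the effective state $\Pi_{ab}$ in the first case, with the target state in the second.

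The lemma I would establish is that an input set detects $\rho_{AB}$ if and only if $\rho_{AB}$ admits a witness of the form \eqref{Ycondition}, i.e. one lying in $\mathrm{span}\{\tau_{x,A}\otimes\tau_{y,B}\}_{x,y}$ with $\tau_{x,A}=\tau_{x,A'}^{\mathrm T}$. For the ``if'' direction I would use the generalized Bell measurements, for which \eqref{piab1} gives $\Pi_{ab}\propto\rho_{AB}^{\mathrm T}$; this is exactly why the witness states are transposed, since $(\,\cdot\,)^{\mathrm T}$ then sends a witness $Y$ of $\rho_{AB}$ in $\mathrm{span}\{\tau_{x,A}\otimes\tau_{y,B}\}$ to a positive semi-definite witness $Y^{\mathrm T}$ of $\Pi_{ab}$ in $\mathrm{span}\{\tau_{x,A'}\otimes\tau_{y,B'}\}$, yielding detection. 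For the ``only if'' direction I would write $\Pi_{ab}=\mathcal{E}(\rho_{AB})$ for the effective-state map $\mathcal{E}=\mathcal{E}_A\otimes\mathcal{E}_B$ implicit in \eqref{piab} and pull a witness $Y'$ of $\Pi_{ab}$ back through the adjoint to $Y=\mathcal{E}^{*}(Y')$; positivity of $\mathcal{E}$ (recall $\Pi_{ab}\ge 0$ whenever $\rho_{AB}\ge 0$) makes $\mathcal{E}^{*}$ positive and keeps $Y\ge 0$, while $\mathrm{Tr}(\rho_{AB}Y)=\mathrm{Tr}(\Pi_{ab}Y')<0$, so $Y$ is a genuine witness of $\rho_{AB}$.

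Granting the lemma, the theorem is short. Suppose $S$ is witness-demanding and fix an input set. If it is tomographically complete its span is all of $\mathcal{L}_{AB}$, so the witness of any beyond-quantum $\rho_{AB}\in S$ guaranteed by \eqref{witnessY} lies in the span and the ``if'' half of the lemma yields detection of $\rho_{AB}$ (alternatively one invokes Theorem \ref{Thm2} directly). If it is not tomographically complete, then---because witness-demandingness is required of every input set and transposition is a completeness-preserving bijection of input sets, so I may take the witness condition to refer to the span $\mathrm{span}\{\tau_{x,A}\otimes\tau_{y,B}\}$ matched by the lemma---no $\rho_{AB}\in S$ has a witness of the form \eqref{Ycondition}, whence the ``only if'' half of the lemma shows that no $\rho_{AB}\in S$ is detected; in particular not every element of $S$ is detected. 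These are exactly the two implications that make $S$ demanding.

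The step I expect to be the main obstacle is the ``only if'' half of the lemma. The pulled-back witness $Y=\mathcal{E}^{*}(Y')$ a priori lies only in $\mathcal{E}^{*}(\mathrm{span}\{\tau_{x,A'}\otimes\tau_{y,B'}\})$, a local-congruence image of the span that need not equal $\mathrm{span}\{\tau_{x,A}\otimes\tau_{y,B}\}$ once the measurement is an arbitrary positive operator rather than a generalized Bell one, for which $\mathcal{E}^{*}=\tfrac{1}{d_Ad_B}(\,\cdot\,)^{\mathrm T}$ returns the witness to the span exactly. The real content is therefore to show that a detecting measurement may be taken to be the generalized Bell one without loss---that it is optimal for detection---so that the witness can be returned to the form \eqref{Ycondition}. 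I would approach this through the congruence form \eqref{piab2} of $\Pi_{ab}$ for pure, full-Schmidt-rank measurements together with the POPT constraint on $\rho_{AB}$, which sharply limits which positive semi-definite operators can overlap $\rho_{AB}$ negatively, using invertibility of the local maps to transport $Y'$ back into the span itself.
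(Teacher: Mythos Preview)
Your pullback idea---sending a witness $Y'$ of $\Pi_{ab}$ through the adjoint $\mathcal{E}^*$ to obtain a witness of $\rho_{AB}$---is precisely the mechanism the paper uses. The divergence is in what you demand of the pulled-back witness. You package the argument as the biconditional lemma ``the input set detects $\rho_{AB}$ iff $\rho_{AB}$ has a witness in $\mathrm{span}\{\tau_{x,A}\otimes\tau_{y,B}\}$,'' and then spend your last paragraph worrying (correctly) that the ``only if'' half may fail: $\mathcal{E}^*(Y')$ lands in $\mathcal{E}^*\bigl(\mathrm{span}\{\tau_{x,A'}\otimes\tau_{y,B'}\}\bigr)$, which for a generic measurement is a \emph{different} subspace from the transposed span. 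Your proposed remedy---arguing that generalized Bell measurements are optimal for detection, so that $\mathcal{E}^*$ reduces to transposition---is plausible but unproved, and would be the substantive content of your argument.

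The paper simply does not need this step. Since witness-demanding is a universal statement over \emph{all} input sets (a point you invoke, but then apply only to the transposed set), it suffices to exhibit, for each $\rho\in S$, \emph{some} tomographically incomplete product family on $AB$ whose span contains a witness of $\rho$. The pulled-back family $\{L^{a*}_{AA'}(\tau_{x,A'})\otimes L^{b*}_{BB'}(\tau_{y,B'})\}_{x,y}$ does exactly that: it is of product form, its span contains $Y^*=\mathcal{E}^*(Y')$ by construction, and it is incomplete because a linear map cannot enlarge the dimension of a span. This already contradicts witness-demanding, with no need to return the witness to the original (transposed) span or to establish any optimality of the Bell measurement. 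So your biconditional lemma is stronger than required, and the obstacle you flag is an artefact of that overreach rather than a genuine difficulty in the theorem.
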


\begin{proof}
Suppose $S$ is not demanding. 
Then, there exists a tomographically
\emph{incomplete} input set $\{\tau_{x,A'}\otimes\tau_{y,B'}\}_{x,y}$
to detect any element $\rho_{AB}\in S$.
That is, for every $\rho_{AB}\in S$, there exist two positive semi-definite operators 
$M_{A'A}^{a}\in\mathcal{L}_{A'A}$ and $M_{BB'}^{b}\in\mathcal{L}_{BB'}$ 
such that the effective state $\Pi_{ab}$ generated via Eq.~\eqref{piab} satisfies
the following:
(i) $\Pi_{ab}$ is a (unnormalized) beyond-quantum state. 
(ii) There exists a witness $Y$ of $\Pi_{ab}$ which satisfies \eqref{Ycondition} with the input set $\{\tau_{x,A'}\otimes\tau_{y,B'}\}_{x,y}$.

For operators $M_{A'A}^{a}\in\mathcal{L}_{A'A}$, we can define the CP map $L_{AA'}^{a}$ from $\mathcal{L}_h(A)$ to $\mathcal{L}_h(A')$ satisfying
\begin{equation}
 L_{AA'}^{a}(W_{A})=\Tr_{A} M_{A'A}^{a} (I_{A'}\otimes W_{A}) 
\end{equation}
for $W_{A}\in \mathcal{L}_{h}(A)$.
$L_{BB'}^{b}$ can be defined analogously. 
We also define 
$L_{AA'}^{a*}$ and $L_{BB'}^{b*}$ as the adjoint maps of 
$L_{AA'}^{a}$ and $L_{BB'}^{b}$ respectively.

%That is, there exists a pair of an effective state $\Pi_{AB}$ and its witness $Y_{ab}$ with the form \eqref{Ycondition}.
Then for every $\rho_{AB}\in S$, we can define the Hermitian matrix 
$Y^{*}:=\sum_{x,y} \alpha_{xy}L_{AA'}^{a*}(\tau_{x,A'})\otimes L_{BB'}^{b*}(\tau_{y,B'})$. And we have 
\begin{equation}\label{witcomp1}
\begin{split}
&\Tr \rho_{AB}Y^{*}\\
=&\sum_{x,y} \alpha_{xy} \Tr [\rho_{AB} L_{AA'}^{a*}(\tau_{x,A'})\otimes L_{BB'}^{b*}(\tau_{y,B'})] \\
=&\sum_{x,y} \alpha_{xy} \Tr [L_{AA'}^{a}\otimes 
 L_{BB'}^{b}(\rho_{AB}) \tau_{x,A'}\otimes \tau_{y,B'}]\\
=& \Tr \Pi_{ab}Y <0.
 \end{split}
\end{equation}
Eq.~\eqref{witcomp1} shows that $Y^{*}$ is a witness of $\rho_{AB}$. However, the set $\{L_{AA'}^{*}(\tau_{x,A'})\otimes L_{BB'}^{*}(\tau_{y,B'})\}_{x,y}$ is tomographically \emph{incomplete} on 
$\mathcal{L}_{A'B'}$ since $\{\tau_{x,A'}\otimes\tau_{y,B'}\}_{x,y}$ is tomographically \emph{incomplete}.
Hence, the relation \eqref{witcomp1} contradicts the assumption that $S$ is witness-demanding. Therefore $\rho_{AB}$ must be demanding.
\end{proof}

%Assume that the dimensions of $\cH_{A'}$ and $\cH_{B'}$ are d.
%$d_A=d_B$.

We define the set ${\cal P}(\cH_{A'}\otimes \cH_{B'})$
of pure states $|\psi\rangle$ on $\cH_{A'}\otimes \cH_{B'}$
whose maximum Schmidt coefficient $\sc(\rho)$
is not greater than $\sqrt{3/2 d}$.
\begin{lemma}\label{BML2}
The linear space spanned by
${\cal P}(\cH_{A'}\otimes \cH_{B'})$
equals the set ${\cal L}_h(A'B')$ of Hermitian matrices on $\cH_{A'}\otimes \cH_{B'}$.
\end{lemma}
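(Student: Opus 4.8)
Lemma~\ref{BML2} asserts that the set $\mathcal{P}(\mathcal{H}_{A'}\otimes\mathcal{H}_{B'})$ of pure states with maximum Schmidt coefficient bounded by $\sqrt{3/2d}$ spans the whole real vector space $\mathcal{L}_h(A'B')$ of Hermitian matrices, whose real dimension is $(d^2)^2=d^4$. The plan is to exhibit, within the span of $\mathcal{P}$, enough Hermitian operators to generate a known spanning set of $\mathcal{L}_h(A'B')$. A convenient target spanning set is the family $\{\ketbra{\phi}{\phi}\}$ of all (unconstrained) pure-state projectors, since it is standard that these span all Hermitian matrices; equivalently one can target the operators $\ketbra{e}{e}\otimes\ketbra{f}{f}$ together with off-diagonal ``cross'' terms built from them.

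First I would observe that the Schmidt-coefficient constraint $\sc(\ket\psi)\le\sqrt{3/2d}$ is a constraint that is easy to satisfy by spreading the state out: for instance the maximally entangled state $\ket{\Phi}=\frac{1}{\sqrt d}\sum_i\ket{i}\ket{i}$ has all Schmidt coefficients equal to $1/\sqrt d<\sqrt{3/2d}$, and more generally any state whose Schmidt vector is ``flat enough'' lies in $\mathcal{P}$. The key maneuver is a polarization-type identity: to recover a rank-one Hermitian operator such as $\ketbra{\alpha}{\beta}+\ketbra{\beta}{\alpha}$ from projectors $\ketbra{\alpha\pm\beta}{\alpha\pm\beta}$, one takes real linear combinations of projectors onto superpositions. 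So I would fix a target operator $|e\rangle|f\rangle\langle e'|\langle f'|+\text{h.c.}$ and try to write it as a real linear combination of projectors $\ketbra{\psi_k}{\psi_k}$ where each $\ket{\psi_k}$ is a suitably weighted superposition of the four product vectors $\ket{e}\ket{f},\ket{e}\ket{f'},\ket{e'}\ket{f},\ket{e'}\ket{f'}$, chosen so that each $\ket{\psi_k}$ has maximum Schmidt coefficient below the threshold.

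The main obstacle is precisely the Schmidt-coefficient bound: a generic superposition of product vectors can be highly unbalanced, with one Schmidt coefficient close to $1$. I expect the heart of the argument to be a counting/perturbation lemma showing that the threshold $\sqrt{3/2d}$ is generous enough to leave a nonempty open set of admissible states around the maximally entangled state, so that within an $\epsilon$-ball of $\ket{\Phi}$ all states remain in $\mathcal{P}$. The strategy would be: since the projectors onto a small open neighborhood of $\ket{\Phi}$ already span a subspace, and since one can rotate $\ket{\Phi}$ by arbitrary local and global unitaries while staying at the flat Schmidt spectrum $1/\sqrt d$, the orbit $\{U\ketbra{\Phi}{\Phi}U^\dagger\}$ together with nearby perturbations should generate all of $\mathcal{L}_h(A'B')$. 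I would make this rigorous by computing the span of the tangent directions $\frac{d}{dt}U(t)\ketbra{\Phi}{\Phi}U(t)^\dagger|_{t=0}$ for one-parameter unitary families, showing these tangent operators, together with $\ketbra{\Phi}{\Phi}$ itself, already span $\mathcal{L}_h(A'B')$; because the tangent span is a linear statement it transfers to the span of the nearby projectors, all of which lie in $\mathcal{P}$ by continuity of $\sc$.

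A cleaner alternative, which I would pursue if the tangent-space computation proves awkward, is to argue by contradiction via duality. Suppose $\mathcal{P}$ does not span $\mathcal{L}_h(A'B')$; then there is a nonzero Hermitian $Z$ orthogonal to every element of $\mathcal{P}$, i.e. $\langle\psi|Z|\psi\rangle=0$ for all $\ket\psi$ with $\sc(\ket\psi)\le\sqrt{3/2d}$. The admissible states form an open subset of the unit sphere (intersected with the Schmidt-bound region, which has nonempty interior since $\sqrt{3/2d}>1/\sqrt d$), and a Hermitian quadratic form that vanishes identically on a nonempty open set of the sphere must vanish identically, forcing $Z=0$. This contradiction completes the proof, and I expect this duality route to be the most economical, with the only nontrivial input being the verification that $\{\ket\psi:\sc(\ket\psi)<\sqrt{3/2d}\}$ has nonempty interior in the sphere --- which follows immediately from $\sqrt{3/2d}>1/\sqrt{d}$ evaluated at the maximally entangled state.
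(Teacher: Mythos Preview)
Your duality argument in the final paragraph is correct and gives a clean proof: the maximum Schmidt coefficient is a continuous function of $|\psi\rangle$, the maximally entangled state has $\sc(|\Phi\rangle)=1/\sqrt{d}<\sqrt{3/(2d)}$, so $\mathcal{P}$ contains a nonempty open subset of the unit sphere; a Hermitian $Z$ with $\langle\psi|Z|\psi\rangle=0$ on that open set is a real polynomial on $\mathbb{R}^{2d^2}$ vanishing on an open cone and therefore identically zero, whence $Z=0$ by polarization.

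This is a genuinely different route from the paper's. The paper is constructive: it takes the generalized Bell basis $\{|\Phi_{j,k}\rangle=X^jZ^k\otimes I\,|\Phi\rangle\}_{j,k}$, whose projectors already lie in $\mathcal{P}$ and span the diagonal of $\mathcal{L}_h(A'B')$ in that basis; for the off-diagonal part it forms two-term superpositions $\sqrt{1-c}\,|\Phi_{j,k}\rangle+\sqrt{c}\,|\Phi_{j',k'}\rangle$ (and the $i\sqrt{c}$ variant), uses the simultaneous Schmidt basis of any two Bell states to bound $\sc$ by $(\sqrt{c}+\sqrt{1-c})/\sqrt{d}$, and picks $c$ so that this equals exactly $\sqrt{3/(2d)}$. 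Your first sketch (polarization via superpositions of product vectors $|e\rangle|f\rangle$) runs into precisely the obstacle you flag---unbalanced Schmidt spectra---and the paper's trick is to superpose \emph{Bell} states rather than product states, keeping all Schmidt coefficients near $1/\sqrt{d}$.

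What each approach buys: the paper's construction explains where the threshold $\sqrt{3/(2d)}$ comes from (it is the worst-case $\sc$ for the explicit spanning family it writes down) and produces a concrete spanning set. Your duality argument is shorter and actually proves more---it works verbatim for any threshold strictly greater than $1/\sqrt{d}$, not just $\sqrt{3/(2d)}$.
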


\begin{proof}
We define the operators $X,Z$ on $\cH_{A'}$ as
\begin{align}
X:= \sum_{j=0}^{d-1} |j+1\rangle \langle j|,\quad
Z:= \sum_{j=0}^{d-1} \omega^j |j\rangle \langle j|,
\end{align}
with $\omega :=e^{2\pi i/d}$.
\if0
Then, we define 
\begin{align}
W(k,j):=e^{i\theta(j,k) }X^j Z^k,
\end{align}
where $(j,k)$ is chosen such that 
the eigenvalues of $W(k,j)$ are
$\{\omega^{l \gcm(d,k,j))}\}_{l=0}^d$.
\fi
We define $ |\Phi\rangle:=
\sum_{j=0}^{d-1}\frac{1}{\sqrt{d}}|jj\rangle$
and
$|\Phi_{j,k}\rangle:=
X^j Z^k
%W(k,j) 
\otimes I_{B'} |\Phi\rangle$.
% state is included in 
%${\cal P}(\cH_{A'}\otimes \cH_{B'})$.
For $(j,k) \neq (j',k')$ and $c \in (0,1)$, we define 
\begin{align}
|\Phi_{j,k,j',k'|\sqrt{c}}\rangle
:=&
\sqrt{1-c} |\Phi_{j,k}\rangle
+\sqrt{c} |\Phi_{j',k'}\rangle \\
|\Phi_{j,k,j',k'|i \sqrt{c}}\rangle
:=&
\sqrt{1-c} |\Phi_{j,k}\rangle
+i \sqrt{c} |\Phi_{j',k'}\rangle.
\end{align}

The set $\{|\Phi_{j,k}\rangle \}$ forms an orthogonal basis of $\cH_{A'}\otimes \cH_{B'}$.
Hence, it is sufficient to show the following:
for $j,k,j',k'$,
there exist real numbers $c \in (0,1)$ such that
\begin{align}
&|\Phi_{j,k,j',k'|\sqrt{c}}\rangle
\langle \Phi_{j,k,j',k'|\sqrt{c}}|,
|\Phi_{j,k,j',k'|i\sqrt{c}}\rangle
\langle \Phi_{j,k,j',k'|i \sqrt{c}}|\nonumber \\
&\in {\cal P}(\cH_{A'}\otimes \cH_{B'}). \label{NM1}
\end{align}
More specifically, 
since $\{|\Phi_{j,k}\rangle \}_{j,k}$ 
forms an orthogonal basis of $\cH_{A'}\otimes \cH_{B'}$, and $\{\ketbra{\Phi_{j,k}}{\Phi_{j,k}}\}_{j,k}\subset {\cal P}(\cH_{A'}\otimes \cH_{B'})$,
the set ${\cal D}(A'B')$
of the diagonal matrices in $\mathcal{L}_{h}(A'B')$ are included in the linear span of 
${\cal P}(\cH_{A'}\otimes \cH_{B'})$. The quotient space ${\cal L}_h(A'B')/{\cal D}(A'B')$
 is spanned by 
$\{ |\Phi_{j',k'}\rangle\langle \Phi_{j,k}|+|\Phi_{j,k}\rangle\langle \Phi_{j',k'}|\}_{
j,k,j',k'}
\cup
\{ i|\Phi_{j',k'}\rangle\langle \Phi_{j,k}|-i|\Phi_{j,k}\rangle\langle \Phi_{j',k'}|\}_{
j,k,j',k'}$. Since the quotient space ${\cal L}_h(A'B')/{\cal D}(A'B')$ includes all off-diagonal elements, the other matrices in $\mathcal{L}_{h}(A'B')$ are equivalent to the matrices in \eqref{NM1}
in the sense of this quotient space. Therefore $\mathcal{L}_{h}(A'B')$ equals to the linear span of 
${\cal P}(\cH_{A'}\otimes \cH_{B'})$ if \eqref{NM1} is satisfied.

We choose $c \in (0,1)$ as
$\sqrt{c}+\sqrt{1-c}=\sqrt{\frac{3}{2}}$.
%When $j,k,j',k'$ are $j,k,0,0$,
%considering the diagonalization of 
%$X^j Z^k$, 
Then, we show that
the maximum Schmidt coefficients of
$|\Phi_{j,k,j',k'|\sqrt{c}}\rangle$
and $|\Phi_{j,k,j',k'|i \sqrt{c}}\rangle$
are upper bounded by $(\sqrt{c}+\sqrt{1-c})/\sqrt{d}= \sqrt{3/2 d}$
as follows.
We choose 
the simultaneous Schmidt basis
$\{|v_{l,A}\rangle\}_{l=0}^{d-1}$, 
$\{|v_{l,B}\rangle\}_{l=0}^{d-1}$ for
the vectors
$|\Phi_{j,k}\rangle$ and $|\Phi_{j',k'}\rangle$ as \cite{hiroshima2004finding}
\begin{align}
|\Phi_{j,k}\rangle
=&\frac{1}{\sqrt{d}}
\sum_{l=0}^{d-1}e^{i\theta_1(l)} |v_{l,A},v_{l,B}\rangle ,\\
|\Phi_{j',k'}\rangle=&\frac{1}{\sqrt{d}}
\sum_{l=0}^{d-1}e^{i\theta_2(l)} |v_{l,A},v_{l,B}\rangle .
\end{align}
Although $\theta_1(l)$ and $\theta_2(l)$ depend on 
$j,k,j',k'$, we omit $j,k,j',k'$ in the expressions for simplicity.
Hence,
\begin{align}
&|\Phi_{j,k,j',k'|\sqrt{c}}\rangle\nonumber \\
=&\frac{1}{\sqrt{d}}\sum_{l=0}^{d-1}
(\sqrt{1-c}e^{i\theta_1(l)}+ \sqrt{c}e^{i\theta_2(l)})
 |v_{l,A},v_{l,B}\rangle \\
&|\Phi_{j,k,j',k'|i\sqrt{c}}\rangle\nonumber \\
=&\frac{1}{\sqrt{d}}\sum_{l=0}^{d-1}
(\sqrt{1-c}e^{i\theta_1(l)}+i \sqrt{c}e^{i\theta_2(l)})
 |v_{l,A},v_{l,B}\rangle .
\end{align}
Since $
|\sqrt{1-c}e^{i\theta_1(l)}+ \sqrt{c}e^{i\theta_2(l)}|
\le \sqrt{c}+\sqrt{1-c} =\sqrt{\frac{3}{2}}$,
the maximum Schmidt coefficient of
$|\Phi_{j,k,j',k'|\sqrt{c}}\rangle$
is upper bounded by $ \sqrt{3/2 d}$.
Similarly, 
since $|\sqrt{1-c}e^{i\theta_1(l)}+ i\sqrt{c}e^{i\theta_2(l)}|
\le \sqrt{\frac{3}{2}}$,
the maximum Schmidt coefficient of
$|\Phi_{j,k,j',k'|i\sqrt{c}}\rangle$
is upper bounded by $ \sqrt{3/2 d}$.
Hence, we have shown that condition \eqref{NM1} can be satisfied. This completes the proof of Lemma \ref{BML2}.
\end{proof}

For $ \rho \in {\cal P}(\cH_{A'}\otimes \cH_{B'})$ and $0<t< 2d/3-1$,
we define $\Pi^t(\rho):=
\frac{1}{d^2}I- \frac{1+t}{d^2}\rho$.
Then, $\Pi^t(\rho)$ is a beyond-quantum state because
any two positive semi-definite matrices $M_{A'}$ and $ M_{B'}$
satisfy
\begin{align}
&\Tr (M_{A'}\otimes M_{B'}) \Pi^t(\rho)\notag\\
\ge &
\frac{1}{d^2}(\Tr M_{A'}) (\Tr M_{B'})
- \frac{1+t}{d^2}\sc(d)^2
(\Tr M_{A'})(\Tr M_{B'})\notag\\
=&
\frac{1}{d^2}(1 - (1+t) \sc(d)^2)
(\Tr M_{A'})
(\Tr M_{B'})\notag\\
\ge &
\frac{1}{d^2}( 1-  (1+t) \cdot \frac{3}{2d})
(\Tr M_{A'})
(\Tr M_{B'})
> 0.
\end{align}
%Here, the first inequality follows 

\begin{lemma}\label{BML1}
We choose $\rho \in {\cal P}(\cH_{A'}\otimes \cH_{B'})$.
When an input set $\{\tau_{x,A'}\otimes\tau_{y,B'}\}_{x,y}$ 
detects any state 
$\rho_{A'B'} \in \{\Pi^t(\rho)\}_{0<t< 2d/3-1}$,
there exist coefficients $c_\rho(x,y)$ such that
$\rho= \sum_{x,y}c_\rho(x,y)\tau_{x,A'}\otimes\tau_{y,B'}$.
\end{lemma}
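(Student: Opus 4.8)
The plan is to drive the whole argument by a single limit $t\to 0^+$ that forces an optimal witness of the family $\{\Pi^t(\rho)\}_t$ to coincide with $\rho$ itself. First I would unpack the detection hypothesis into a statement about positive-semidefinite witnesses of $\Pi^t(\rho)$. By part (ii) of the definition of detection, for each admissible $t$ there is a witness $Y$ of the effective state $\Pi_{ab}$ lying in the input span $\mathrm{span}\{\tau_{x,A'}\otimes\tau_{y,B'}\}_{x,y}$; since every quantum state on $A'B'$ is an arbitrary density matrix, the requirement $\Tr\sigma Y\ge 0$ for all quantum $\sigma$ together with self-duality of the positive cone forces $Y\ge 0$. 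Writing $\Pi_{ab}=(L_{AA'}^{a}\otimes L_{BB'}^{b})(\Pi^t(\rho))$ with the completely positive maps introduced in the proof of Theorem \ref{Th5}, I would transport the witnessing inequality $\Tr\Pi_{ab}Y<0$ back to the target through the adjoint maps, obtaining a positive-semidefinite $\tilde Y:=(L_{AA'}^{a*}\otimes L_{BB'}^{b*})(Y)$ with $\Tr\Pi^t(\rho)\tilde Y<0$. Thus detection of every member $\Pi^t(\rho)$ supplies, for every admissible $t$, a nonzero positive-semidefinite witness of $\Pi^t(\rho)$ built from the input span.

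The core of the proof then exploits the rank-one structure $\Pi^t(\rho)=\frac{1}{d^2}I-\frac{1+t}{d^2}\lvert\psi\rangle\langle\psi\rvert$, where $\rho=\lvert\psi\rangle\langle\psi\rvert$ is pure. For a positive-semidefinite witness $Y$, the inequality $\Tr\Pi^t(\rho)Y<0$ is equivalent, after normalising $\Tr Y=1$, to $\langle\psi\rvert Y\lvert\psi\rangle>\frac{1}{1+t}$. Since $0<t<2d/3-1$ may be taken arbitrarily close to $0$, the right-hand side approaches $1$ from below, so the attainable overlaps $\langle\psi\rvert Y\lvert\psi\rangle$ are pushed up to the extreme value $1$. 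Because the set of normalised positive-semidefinite matrices drawn from the finite-dimensional, hence closed, input span is compact, I would extract a convergent sequence $Y_{t_n}\to Y^{\ast}$ as $t_n\to 0^+$, and continuity gives $\langle\psi\rvert Y^{\ast}\lvert\psi\rangle=1=\Tr Y^{\ast}$. For a positive-semidefinite unit-trace matrix this saturation occurs only for the rank-one projector, so $Y^{\ast}=\lvert\psi\rangle\langle\psi\rvert=\rho$, whence $\rho=\sum_{x,y}c_\rho(x,y)\tau_{x,A'}\otimes\tau_{y,B'}$ for suitable coefficients, which is the claim.

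The step that requires care — and which I expect to be the main obstacle — is the final membership claim: the pulled-back witnesses $\tilde Y_t$ live a priori in the image span $\mathrm{span}\{L_{AA'}^{a*}(\tau_{x,A'})\otimes L_{BB'}^{b*}(\tau_{y,B'})\}$ rather than in the original input span, so one must guarantee that the limiting $Y^{\ast}$ still lands in $\mathrm{span}\{\tau_{x,A'}\otimes\tau_{y,B'}\}$. I would resolve this by invoking the reduction behind Theorem \ref{Th5}: since a completely positive map can only shrink the dimension of a span, an incomplete input set produces an incomplete effective span, so it suffices to treat the witness-demanding formulation in which the witness of $\Pi^t(\rho)$ is taken directly inside the input span. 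With the maps thus absorbed, every $Y_t$ stays in the single fixed closed subspace $\mathrm{span}\{\tau_{x,A'}\otimes\tau_{y,B'}\}$ throughout the limit, and closedness delivers $Y^{\ast}=\rho$ inside it. I note that condition (i) (beyond-quantumness of the effective state) guarantees that such witnesses exist at all, but the driving inequality in the argument comes entirely from condition (ii) combined with the rank-one perturbation and the $t\to 0^+$ limit.
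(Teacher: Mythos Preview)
Your approach is correct and matches the paper's: both drive $t\to 0^{+}$ so that the only positive-semidefinite witness of $\Pi^{t}(\rho)$ is forced toward a scalar multiple of $\rho$, hence $\rho$ lies in the input span. The paper states this more tersely---noting that $\Tr\Pi^{0}(\rho)Y=0$ forces $Y\propto\rho$, so for infinitesimal $t$ any witness is ``limited to a constant times of $\rho$''---whereas you spell out the normalization, compactness, and the effective-state versus direct-witness distinction explicitly.
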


\begin{proof}
For any positive semi-definite matrix $Y$,
the relation $\Tr \Pi^0(\rho) Y=0$ holds only when 
$Y$ is a constant times of $\rho$.
When $t$ is an infinitesimal small real number $t>0$
a witness of $\Pi^t(\rho)$ is limited to 
a constant times of $\rho$.
Hence, the desired statement is obtained.
\end{proof}

\begin{theorem}
The family
$\{\Pi^t(\rho)\}_{\rho \in {\cal P}(\cH_{A'}\otimes \cH_{B'}),
0<t< 2d/3-1}$ of 
beyond-quantum states is witness-demanding.
\end{theorem}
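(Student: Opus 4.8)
The plan is to establish the biconditional in the definition of witness-demanding for the family $S=\{\Pi^t(\rho)\}$. One direction is immediate: if the input set $\{\tau_{x,A'}\otimes\tau_{y,B'}\}_{x,y}$ is tomographically complete, its linear span is all of $\mathcal{L}_h(A'B')$, and since each $\Pi^t(\rho)$ is beyond-quantum (as already shown, with $\rho$ itself serving as a witness because $\Tr \Pi^t(\rho)\rho=-t/d^2<0$ while $\rho\ge 0$), a witness of the form \eqref{Ycondition} trivially exists. So the work is entirely in the converse: assuming that for every $\rho\in{\cal P}(\cH_{A'}\otimes \cH_{B'})$ and every $t\in(0,2d/3-1)$ the state $\Pi^t(\rho)$ admits a witness of the form \eqref{Ycondition}, I must deduce that the input set is tomographically complete.

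First I would fix $\rho=\ketbra{\psi}{\psi}\in{\cal P}(\cH_{A'}\otimes \cH_{B'})$ and show that $\rho$ lies in the (closed, finite-dimensional) linear span $V$ of $\{\tau_{x,A'}\otimes\tau_{y,B'}\}_{x,y}$. This is precisely the content of Lemma \ref{BML1}, whose mechanism I would run explicitly: for each small $t>0$ choose a witness $Y_t\in V$ of $\Pi^t(\rho)$, normalize it to unit trace (it is a nonzero positive semi-definite matrix), and pass to a convergent subsequence as $t\to 0^+$ using compactness of the set of density matrices. The limit $Y_0$ is a density matrix, lies in $V$ because $V$ is closed, and satisfies $\Tr \Pi^0(\rho) Y_0\le 0$ by continuity. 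Since $\Pi^0(\rho)=\frac{1}{d^2}(I-\rho)\ge 0$ forces $\Tr \Pi^0(\rho) Y_0\ge 0$ for any $Y_0\ge 0$, equality holds, and $\Tr \Pi^0(\rho) Y_0=0$ together with $Y_0\ge 0$ and $\Tr Y_0 = 1$ pins down $Y_0=\rho$. Hence $\rho\in V$.

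Since this holds for every $\rho\in{\cal P}(\cH_{A'}\otimes \cH_{B'})$, the span $V$ contains all of ${\cal P}(\cH_{A'}\otimes \cH_{B'})$, hence its entire linear span. By Lemma \ref{BML2} this linear span equals $\mathcal{L}_h(A'B')$, so $V=\mathcal{L}_h(A'B')$ and the input set is tomographically complete, which completes the converse and hence the theorem.

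The main obstacle is the limit argument in the middle step. For any fixed $t>0$ the set of witnesses of $\Pi^t(\rho)$ is a full-dimensional cone, not merely the ray $\{c\rho\}$, so no single value of $t$ allows me to conclude $\rho\in V$; the conclusion emerges only in the limit $t\to 0^+$. I must therefore be careful that the hypothesis supplies witnesses in $V$ along a sequence $t_n\to 0$, that the normalized witnesses stay in the compact set of density matrices so that a convergent subsequence exists, and that closedness of $V$ together with continuity of $(t,Y)\mapsto\Tr\Pi^t(\rho)Y$ are used correctly to convert the strict inequalities $\Tr\Pi^{t_n}(\rho)Y_{t_n}<0$ into the limiting equality that rigidly identifies $Y_0$ with $\rho$.
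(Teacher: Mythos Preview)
Your proposal is correct and follows the same approach as the paper: you invoke Lemma~\ref{BML1} (whose limiting argument you spell out with more care via normalization, compactness, and closedness of the span) to place each $\rho\in{\cal P}(\cH_{A'}\otimes\cH_{B'})$ into the span, and then Lemma~\ref{BML2} to conclude tomographic completeness. The only difference is that you make the $t\to 0^+$ limit rigorous where the paper's proof of Lemma~\ref{BML1} argues informally with ``infinitesimal small $t$''.
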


Therefore, the family
$\{\Pi^t(\rho)\}_{\rho \in {\cal P}(\cH_{A'}\otimes \cH_{B'}),
0<t< 2d/3-1}$ of beyond-quantum states is demanding.
That is,
to realize universal completeness,
the input set $\{\tau_{x,A'}\otimes\tau_{y,B'}\}_{x,y}$ 
needs to be tomographically complete.

\begin{proof}
We choose an input set $\{\tau_{x,A'}\otimes\tau_{y,B'}\}_{x,y}$ that
detects any state $\rho_{A'B'} $ in
$\{\Pi^t(\rho)\}_{\rho \in {\cal P}(\cH_{A'}\otimes \cH_{B'}),
0<t< 2d/3-1}$.
Due to Lemma \ref{BML1},
for $\rho \in {\cal P}(\cH_{A'}\otimes \cH_{B'})$,
there exist coefficients $c_\rho(x,y)$ such that
$\rho= \sum_{x,y}c_\rho(x,y)\tau_{x,A'}\otimes\tau_{y,B'}$.
Due to Lemma \ref{BML2},
the input set $\{\tau_{x,A'}\otimes\tau_{y,B'}\}_{x,y}$ 
spans the of Hermitian matrices on $\cH_{A'}\otimes \cH_{B'}$.
\end{proof}

\if0
Each matrix $Y$ on systems $A'B'$ can be converted to the linear map $L[Y]$ 
from
${\cal L}_h(A')$ to
${\cal L}_h(B')$ as follows, where
${\cal L}_h(A')$ is
the set of Hermitian matrices on $A'$;
\begin{align}\label{Ldefinition}
L[Y](M):=\Tr_{A'} Y (I_{B'}\otimes M)
\end{align}
for $M \in  {\cal L}_h(A')$.
In the following, if there is no possibility of confusion,
$L[Y]$ will be simplified into $L$. 
We consider the inner product on 
${\cal L}_h(A')$ as
\begin{align}
\langle M,M'\rangle:= \Tr MM'
\end{align}
for $M,M' \in  {\cal L}_h(A')$.
Based on this inner product, we define its adjoint map
$L^*$ from ${\cal L}_h(B')$ to ${\cal L}_h(A')$.
Then, 
$L^*L$ ($L L^*$) is a 
positive semi definite symmetric linear map 
on $ {\cal L}_h(A')$ (${\cal L}_h(B')$).
When $\{\tau_{x,A'}\}_x$ ($\{\tau_{y,B'}\}_y$) spans
the range of $L^*L$ ($L L^*$),
the linear map $L$ has the form 
$\sum_{x,y} a_{x,y}|\tau_{y,B'}\rangle \langle \tau_{x,A'}|$ and $Y$ can be written as \eqref{Ycondition}. In particular, when the range of $L^*$ ($L$)
equals the space $ {\cal L}_h(A')$ (${\cal L}_h(B')$),
the set $\{\tau_{x,A'}\}_x$ ($\{\tau_{y,B'}\}_y$) needs to be tomographically complete such that $Y$ can be written as \eqref{Ycondition}. 
For example, when the positive semi definite matrix $Y$
is given as the generalized Bell state $ |\Phi\rangle \langle \Phi|$, where
$ |\Phi\rangle:=
\frac{1}{\sqrt{d}}\sum_{j=0}^{d-1}|jj\rangle$.
The linear map $L[ |\Phi\rangle \langle \Phi|]$
is $\frac{1}{d}$ times of the identity map, whose range equals the space $ {\cal L}_h(A')$ (${\cal L}_h(B')$). Therefore,
to realize it as a witness,  
the set $\{\tau_{x,A'}\}_x$ ($\{\tau_{y,B'}\}_y$) needs to be tomographically complete, and we can choose the coefficients $a_{x,y}$ such that
\begin{align}
|\Phi\rangle \langle \Phi|=\sum_{x,y}a_{x,y}\tau_{x,A'} \otimes \tau_{y,B'}.
\end{align}

Now as a typical case, 
we assume that 
${\cal H}_{A'}$ and ${\cal H}_{B'}$ 
are $d$-dimensional, then the a locally tomographically complete input set 
$\{\tau_{x,A'}\}_x$ ($\{\tau_{y,B'}\}_y$)
contains at least $d^2$ elements, and a tomographically complete input set $\{\tau_{x,A'}\otimes\tau_{y,B'}\}_{x,y}$ 
contains at least $d^4$ elements. 
In the entangled state detection protocols,
the operator of the form
\begin{equation}\label{Ycondition1}
 \hat{Y}=\sum_{x}^{d^2} c_{x} \tau_{x,A'}\otimes \tau_{x,B'}
\end{equation}
often works as entanglement witness.
%with tomographically complete.
%sets of local operators $\{\tau_{x,A'}\}$ and $\{\tau_{x,A'}\}$.
In the above form, 
the number of input states is reduced from $d^4$ to $d^2$,
therefore the input set is not
tomographically complete on the joint system $A'B'$.

We will show that the set of input states 
$\{\tau_{x,A'}\otimes\tau_{x,B'}\}_{x}$
cannot detect
the beyond-quantum state $
\frac{1}{d}I-|\Phi\rangle\langle \Phi|$.
$\frac{1}{d}I-|\Phi\rangle\langle \Phi|$ is a (unnormalized) beyond quantum state because,
as shown in \cite{hayashi2006bounds}, 
%Please cite this paper for this part.
%M. Hayashi, D. Markham, M. Murao, M. Owari, and S. Virmani, 
%``Bounds on Multipartite Entangled Orthogonal State Discrimination Using Local %Operations and Classical Communication," 
%{\em Physical Review Letters}, Vol.96, 040501, (2006).
any product pure state
$\tau_{A} \otimes \tau_{B}$ satisfies
\begin{equation}
\begin{split}
&\Tr |\Phi\rangle \langle \Phi| 
\tau_{A} \otimes \tau_{B}
\le
(\Tr \tau_{A})
\Tr |\Phi\rangle \langle \Phi| \tau_{B} \otimes I_A \\
=&
(\Tr\tau_{A})
\Tr(\Tr_A |\Phi\rangle \langle \Phi| )\tau_{B}
=
(\Tr\tau_{A})
\frac{1}{d}I_B \tau_{B}\\
=&\frac{1}{d}
(\Tr\tau_{A})
(\Tr\tau_{B}).\label{NBD}
\end{split}
\end{equation}
Hence, we have
\begin{align}
\Tr (\frac{1}{d}I-|\Phi\rangle\langle \Phi|)
\tau_{A} \otimes \tau_{B} \ge 0,
\end{align}
which guarantees that $\frac{1}{d}I-|\Phi\rangle\langle \Phi|$ is a beyond-quantum state.

Let $\tau_{x,A'}$ and $\tau_{x,A'}$
be pure states
$|u_{x,A'}\rangle \langle u_{x,A'}|$
and
$|u_{x,B'}\rangle \langle u_{x,B'}|$, respectively,
and the vectors $|u_{x,B'}\rangle|u_{x,A'}\rangle$
be linearly independent,
we will show that
\begin{align}
\Tr \hat{Y}(\frac{1}{d}I-|\Phi\rangle\langle \Phi|)
 \ge 0.
\label{BNU}
\end{align}

The above fact can be shown as follows.
Since the pure states
$|u_{x,A}\rangle|u_{x,B}\rangle$
%$\{\tau_{x,B}\otimes \tau_{x,A}\}_x$
are linearly independent,
there exist vectors
$|v_x\rangle$ on ${\cal H}_A'\otimes {\cal H}_B'$
such that $\langle v_{x'}|u_{x,A'}\rangle|u_{x,B'}\rangle=
\delta_{x,x'}$.
Since $\hat{Y}\ge 0$, we have
\begin{align}
c_{x'}
=\Tr \sum_{x} c_{x}\tau_{x,A'} \otimes \tau_{x,B'}
| v_{x'}\rangle\langle v_{x'}| 
\ge 0.
\end{align}
Using \eqref{NBD},
we have
\begin{align}
&\Tr \hat{Y}|\Phi\rangle\langle \Phi| 
= \sum_{x} c_{x}
\langle \Phi| \tau_{x,A'} \otimes \tau_{x,B'}
|\Phi\rangle \\
\le &  \sum_{x} c_{x}\frac{1}{d}
(\Tr\tau_{x,A'}) (\Tr\tau_{x,B'}).
\end{align}
Since 
\begin{align}
\Tr \hat{Y}\frac{1}{d} I
=\sum_{x} c_{x}\frac{1}{d}
(\Tr\tau_{x,A'}) (\Tr\tau_{x,B'}),
\end{align}
we have \eqref{BNU}. According to theorem \ref{Th5}, \eqref{BNU} indicates that state $\frac{1}{d}I-|\Phi\rangle\langle \Phi|$ cannot be detected with $\{\tau_{x,A'}\otimes\tau_{x,B'}\}_{x}$ using our MDI protocol.

The above discussion shows
that the set of inputs 
$\{\tau_{x,A'} \otimes\tau_{x,B'}\}_{x}$
cannot detect 
the beyond-quantum state $\frac{1}{d}I-|\Phi\rangle\langle \Phi|$
under the 
assumptions
that 
$\tau_{x,A'}$ and $\tau_{x,B'}$
are pure states
$|u_{x,A'}\rangle \langle u_{x,A'}|$
and
$|u_{x,B'}\rangle \langle u_{x,B'}|$, respectively, and
the vectors $|u_{x,A'}\rangle|u_{x,B'}\rangle$
are linearly independent.

In the following, we give a proof for the general case. We show that when
the set of input states
$\{\tau_{x,A'} \otimes\tau_{y,B'}\}_{x,y}$
is not tomographically complete,
there is a beyond-quantum state that cannot be
detected. The key idea is to construct beyond-quantum states which have one small enough negative eigenvalue, and the reduced density matrices of the corresponding eigenstate is full-rank. Consequently, the witness for such beyond-quantum states must also be in a one-dimensional space, and require tomographically complete local input sets to characterize.

\begin{lemma}\Label{Le1}
When $Y$
is a rank-one positive semi-definite matrix, 
it is written as 
$|\phi\rangle \langle \phi|$
by using $|\phi \rangle \in {\cal H}_{A'}\otimes {\cal H}_{B'}$.
Then,
the range of $L[|\phi\rangle \langle \phi|]^*$ 
($L[|\phi\rangle \langle \phi|] $)
is the set of Hermitian matrices 
whose range is included in
the range of 
$\Tr_{B'}|\phi\rangle \langle \phi|$
($\Tr_{A'}|\phi\rangle \langle \phi|$).
\end{lemma}

Lemma 1 can be proved straightforwardly by using the definition of $L[Y]$ in Eq.~\eqref{Ldefinition}. Then we construct a family of (unnormalized) single-parameter beyond-quantum states.

\begin{theorem}\Label{Th3}
Assume that there exist
%a rank-one 
two positive semi definite matrices 
$Y^\circ $
and $T^\circ $
on $A'B'$ 
and a positive real number $t_0$
to satisfy the following three conditions.
%The rank of $Y_{a,b} $ is one.
(i) The range of $L[Y^\circ]^*$ ($L[Y^\circ]$)
equals the space ${\cal L}_h(A')$ (${\cal L}_h(B')$).
(ii) A positive semi definite matrix
$Y$ satisfies the condition
$\Tr YT^\circ=0$
if and only if $Y$ is a positive constant times of 
$Y^\circ$.
(iii) The matrix $\Pi^{t}:=T^\circ-t Y^\circ$
%belongs to the cone $SEP^*$
is a unnormalized beyond-quantum state
for $0<t< t_0$.
Then, the family $\{\Pi^{t}\}_{0<t< t_0}$ contains 
a witness-demanding element.
\end{theorem}

\begin{proof}
For any matrix $Y^{\circ}$ which satisfies condition $(i)$, any matrix that is close enough to $Y^{\circ}$ will also satisfy condition $(i)$. That is, we can always find $\delta>0$ such that a matrix $Y$ also satisfies condition $(i)$ if $||Y-Y^{\circ}||<\delta$.

Then, for a family of beyond quantum states $\{\Pi^{t}\}_{0<t< t_0}$ which satisfies conditions $(ii)$ and $(iii)$, we can always find $t^*\leq t_{0}$ such that the subset $\{\Pi^{t}\}_{0<t< t^*}$ satisfies the following. For any state $\Pi^{t}\in \{\Pi^{t}\}_{0<t< t^*}$, and a positive semi-definite matrix $Y\in \mathcal{L}_{A'B'}$, the inequality
\begin{equation}\label{Ycondition2}
  \Tr \Pi^{t} Y <0
\end{equation}
can be satisfied only if there exists a real number $c$ such that $||cY-Y^{\circ}||<\delta$. This can be seen from the fact that the states $\Pi^{t}$ we construct are full-rank matrices in $Y\in \mathcal{L}_{A'B'}$ with constant positive eigenvalues and one varying negative eigenvalue $t$. When $t\rightarrow 0$ for $\Pi^{t}$, $Y$ also need to approach to $Y^{\circ}$ (up to a constant factor $c$) such that \eqref{Ycondition2} is satisfied. 

Therefore, the witness of a state $\Pi^{t}\in \{\Pi^{t}\}_{0<t< t^*}$ satisfies Eq.~\eqref{Ycondition} if and only if the input set $\{\tau_{x,A'}\otimes\tau_{y,B'}\}_{x,y}$ is tomographically complete, which means that the states $\Pi^{t}\in \{\Pi^{t}\}_{0<t< t^*}$ are witness-demanding.
\end{proof}

\begin{theorem}\Label{Th4}
Assume that $\dim {\cal H}_{A'} = \dim {\cal H}_{B'}$.
Then, 
there exist
%a rank-one 
two positive semi-definite matrices 
$Y^\circ $
and $T^\circ $
on $A'B'$ 
and a positive real number $t_0$
to satisfy three conditions in theorem \ref{Th3}. 
%to form a beyond-quantum family of theorem \ref{Th3} type.
\end{theorem}
The proof of Theorem \ref{Th4} will be given in the appendix.
Therefore, there exists a witness-demanding state 
when $\dim {\cal H}_{A'} = \dim {\cal H}_{B'}$.

Theorems \ref{Th3}, \ref{Th4} prove that there exist witness demanding beyond-quantum states in the family $\{\Pi_{t}\}$. Using theorem \ref{Th5} we know that such beyond-quantum states are also demanding. Therefore prove the necessity of having tomographically complete input sets $\{\tau_{x,A'}\}_x$ and $\{\tau_{y,B'}\}_y$ to attain completeness for states $\{\Pi_{t}\}$ and to attain universal completeness.
\fi

Now we discuss how many input states are required to at least detect one beyond-quantum state, and how to choose the input states in such a circumstance. We present a result in 2-dimensional case.

\begin{theorem}\Label{Th6} 
Let $d_{A}=d_{B}=2$, and let $\{\cal{Z}\}$ be the set of all matrices which are unitarily equivalent to the Pauli matrix $\sigma_{z}$.  
If there exists a matrix $Z\in\cal{Z}$ such that $\Tr(Z\tau_{x,A'})=0$ for all input states $\tau_{x,A'}$, no beyond-quantum state $\rho_{AB}\in \mathcal{S}_{AB}$ can be detected by the MDI protocol. The theorem works analogously for input states $\tau_{y,B'}$.
\end{theorem}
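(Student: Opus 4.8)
The plan is to invoke the detection criterion of Section~\ref{S5}: a beyond-quantum state $\rho_{AB}$ can be detected only if, for some positive semi-definite $M^a_{A'A}$ and $M^b_{BB'}$, the effective state $\Pi_{ab}$ of Eq.~\eqref{piab} is beyond-quantum (condition (i)) and admits a witness $Y$ of the restricted form \eqref{Ycondition} (condition (ii)). Recall that a witness of $\Pi_{ab}$ satisfies \eqref{witnessY}, and that $\Tr(\sigma Y)\ge 0$ for all quantum states $\sigma$ is equivalent to $Y\ge 0$; hence every admissible witness is positive semi-definite. My strategy is to show that, under the hypothesis of the theorem, \emph{every} Hermitian matrix $Y$ of the form \eqref{Ycondition} that is positive semi-definite is in fact separable. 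Since each effective state $\Pi_{ab}$ is beyond-quantum and hence lies in the POPT cone $SEP(A':B')^*$, a separable $Y=\sum_k A_k\otimes B_k$ (with $A_k,B_k\ge 0$) obeys $\Tr(\Pi_{ab}Y)=\sum_k\Tr[\Pi_{ab}(A_k\otimes B_k)]\ge 0$, contradicting the defining inequality $\Tr(\Pi_{ab}Y)<0$ of \eqref{witnessY}. This rules out condition (ii) for every $\rho_{AB}$, so no beyond-quantum state can be detected.

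First I would translate the hypothesis into a constraint on $Y$. The condition $\Tr(Z\tau_{x,A'})=0$ for all $x$ says that $\mathrm{span}\{\tau_{x,A'}\}_x$ is Hilbert--Schmidt orthogonal to $Z$, so any $Y=\sum_{x,y}\alpha_{xy}\tau_{x,A'}\otimes\tau_{y,B'}$ satisfies $\Tr_{A'}[(Z\otimes I_{B'})Y]=0$. Expanding the qubit factor $A'$ in the Pauli basis attached to $Z$, this means the $Z$-component of $Y$ vanishes. Working in the eigenbasis of $Z$ and writing $\tilde\sigma_x,\tilde\sigma_y$ for the two off-diagonal Pauli operators, we therefore have $Y=I_{A'}\otimes Y_0+\tilde\sigma_x\otimes Y_1+\tilde\sigma_y\otimes Y_2$ with no $Z_{A'}\otimes(\cdot)$ term, where $Y_0,Y_1,Y_2$ are Hermitian on $B'$.

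The crux is the identity $Y^{\mathrm{T}_{A'}}=(\tilde\sigma_x\otimes I_{B'})\,Y\,(\tilde\sigma_x\otimes I_{B'})$, where $\mathrm{T}_{A'}$ denotes partial transposition in the eigenbasis of $Z$. In that basis partial transposition fixes $I_{A'}$ and $\tilde\sigma_x$ and sends $\tilde\sigma_y\mapsto-\tilde\sigma_y$, while conjugation by the Hermitian unitary $\tilde\sigma_x$ fixes $I_{A'}$ and $\tilde\sigma_x$, sends $\tilde\sigma_y\mapsto-\tilde\sigma_y$, and sends $Z\mapsto-Z$; the two operations agree on $Y$ precisely because $Y$ carries no $Z_{A'}$-component, so the sign flip of $Z$ is invisible. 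Consequently $Y^{\mathrm{T}_{A'}}$ is unitarily conjugate to $Y\ge 0$ and is therefore positive semi-definite, i.e.\ $Y$ has positive partial transpose. Since both $A'$ and $B'$ are qubits ($d_{A'}=d_{B'}=2$), the Peres--Horodecki criterion yields that $Y$ is separable, completing the chain of reasoning begun in the first paragraph. The statement for the inputs $\{\tau_{y,B'}\}_y$ follows verbatim after interchanging $A'$ and $B'$ and transposing on $B'$.

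I expect the main obstacle to be establishing the central identity cleanly for a general $Z\in\{\mathcal{Z}\}$ rather than only for $Z=\sigma_z$; this is exactly where the qubit structure of $A'$ enters (removing one Pauli direction makes partial transposition coincide with conjugation by the remaining off-diagonal Pauli) together with the restriction $d=2$ (so that positive partial transpose implies separability). It is worth verifying carefully that the vanishing of the $Z_{A'}$-component is precisely what renders the $Z\mapsto-Z$ flip harmless, since this is the single point on which the entire argument turns.
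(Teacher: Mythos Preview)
Your argument is correct and takes a genuinely different route from the paper's.

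The paper works on the \emph{primal} side: using the structural result (Lemma~\ref{LL3}, from \cite{arai2023detection}) that every $2\times 2$ POPT operator decomposes as $\Gamma_{A'}(\rho_1)+\rho_2$ with $\rho_1,\rho_2\ge 0$, it explicitly constructs, for each effective state $\Pi_{ab}$, a positive semi-definite $\hat\Pi_{ab}$ differing from $\Pi_{ab}$ only by a term proportional to $Z\otimes(\cdot)$. Since the inputs are blind to $Z$, this $\hat\Pi_{ab}$ reproduces all correlations, so the SDP \eqref{SDP} returns $c_{ab}=0$.

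You work on the \emph{dual} side: you show that any admissible witness $Y\ge 0$ in $\mathrm{span}\{\tau_{x,A'}\otimes\tau_{y,B'}\}$ has vanishing $Z_{A'}$-component, and hence satisfies $Y^{\mathrm{T}_{A'}}=(\tilde\sigma_x\otimes I)Y(\tilde\sigma_x\otimes I)\ge 0$; the Peres--Horodecki criterion in $2\times 2$ then forces $Y$ to be separable, whence $\Tr(\Pi_{ab}Y)\ge 0$ for the POPT operator $\Pi_{ab}$, contradicting \eqref{witnessY}.

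The two proofs are essentially dual to each other: in $2\times 2$ the POPT cone is exactly the dual of the separable cone, so the paper's decomposition of $\Pi_{ab}$ and your separability of $Y$ are two faces of the same coin. Your route has the advantage of invoking only the classical Peres--Horodecki theorem rather than the less widely known structural lemma from \cite{arai2023detection}, and it makes transparent why removing one Pauli direction is exactly what kills all witnesses (PPT becomes automatic). The paper's route, on the other hand, is more constructive: it exhibits the feasible point of the SDP directly and ties in cleanly with the protocol description in Section~\ref{S4} without passing through the witness reformulation of Section~\ref{S5}. One minor point of phrasing: you write ``each effective state $\Pi_{ab}$ is beyond-quantum''; strictly, only those $\Pi_{ab}$ satisfying condition~(i) are, but those are the only ones relevant to the detection criterion, so the logic is unaffected. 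You might also remark that $\Pi_{ab}$ is POPT for \emph{every} choice of $M^a_{A'A},M^b_{BB'}$ whenever $\rho_{AB}$ is POPT (since the maps $L^{a*}_{AA'},L^{b*}_{BB'}$ are completely positive), which makes the argument independent of whether condition~(i) actually occurs.
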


Theorem \ref{Th6} implies that, in order that our MDI protocol detects at least one beyond-quantum state with $d_A=d_B=2$,
Alice and Bob need to input at least 3 quantum states, respectively.
Additionally, the chosen input states should be linearly independent and non-orthogonal.

\section{conclusion and discussion}\label{S6}
In this paper, we have proposed a practical MDI test to detect beyond-quantum states, which further shows the power of MDI protocol on such a task. 
Compared to the previous work \cite{lobo2022certifying} which also applies a MDI Bell test for the same task, 
our method processes the experimental correlations in an optimal way, therefore greatly improves the detectability in practical scenarios where the form of the tested state, the measurement operators,
nor the input states is not known. 
More specifically, our method attains the universal completeness only with the natural assumption of knowing the local dimensions of the tested system, instead of the tested states. 

Also, due to the advantage that the detection result of our protocol is independent of the form of tested state and measurement operators, 
we are able to study the importance of having tomographically complete quantum input set. 
We have shown that, under the typical case where $d_{A}=d_{B}$,
the universal completeness can be attained 
if and only if the input sets are tomographically complete. 
We have also clarified the importance of  
the choice of the input states 
when the input set is tomographically incomplete. 
We have presented an example in $d_{A}=d_{B}=2$ cases to support this importance. All the above results are possible to be extended into continuous variable case based on the method proposed in Ref.~\cite{abiuso2021measurement}. We leave such extension for further work.

Further, as a byproduct,
using the relation between entangled states and beyond-quantum states, 
we can derive an interesting observation for 
MDI entanglement detection as follows.
In $d_{A}=d_{B}=2$ case, 
any beyond-quantum state $\rho_{AB}^{bq}$ 
can be described by using 
an entangled state $\rho_{AB}^{E}$ 
with partial transposition, namely, $\rho_{AB}^{bq}=\Gamma_{A}\otimes I_{B}(\rho_{AB}^{E})$. 
Then, their respective effective states $\Pi^{bq}_{ab}$ and $\Pi_{ab}^{E}$ have the relation 
$\Pi^{bq}_{ab}=\Gamma_{A}\otimes I_{B}(\Pi_{ab}^{E})$
under the same measurement operators $M_{A'A}^{a}$ and $M_{BB'}^{b}$.
Therefore, 
universal completeness in entangled state detection implies universal completeness in beyond-quantum state detection.
This implication yields that
the input set needs to be tomographically complete to attain universal completeness in entangled state detection
in the $d_{A}=d_{B}=2$ case.
A similar kind of observation can be expected in more general cases.

\section*{Acknowledgement}
We thank Hayato Arai and Paolo Abiuso for their valuable comments. MH was supported in part by the National
Natural Science Foundation of China under Grant 62171212.

\bibliographystyle{apsrev4-1} 
\bibliography{ref}

\appendix

\section{Proof that LOCC will not produce false detection}\label{appendixA}
Since a multi-round LOCC is composed of one-way LOCC \cite{chitambar2014everything}, it is sufficient to prove that any one-way LOCC does not produce false detection when the tested state is quantum. Let $\mathcal{B}(\mathcal{H}_{X})$ denote the set of bounded linear operators on $\mathcal{H}_{X}$. Any one-way LOCC operated on a tested system $\mathcal{B}(\mathcal{H}_{AB})$ can be represented as a quantum instrument composed of $n$ completely positive (CP) maps $\{\mathcal{F}_{i}\}_{i=1,2,...,n}$ satisfying the following conditions:

(i) $\mathcal{F}_{i}=\mathcal{E}_{i}^{A}\otimes\mathcal{P}_{i}^{B}$, where $\mathcal{E}_{i}^{A}$ is a CP map on $\mathcal{B}(\mathcal{H}_{A})$ and $\mathcal{P}_{i}^{B}$ is a completely positive trace-preserving (CPTP) map on $\mathcal{B}(\mathcal{H}_{B})$. Note that the condition applies to the case where the communication is from Alice to Bob. In the opposite direction case, $\mathcal{F}_{i}=\mathcal{P}_{i}^{A}\otimes\mathcal{E}_{i}^{B}$.

(ii) $\sum_{i=1}^{n}\mathcal{F}_{i}$ is a CPTP map.\\
For such a one-way LOCC, we have the following relation: 
\begin{equation}
\begin{split}
  &\Tr [\mathcal{E}^{A'A}_{i}\otimes\mathcal{F}^{BB'}_{j}(I\otimes\rho_{AB}\otimes I)\cdot  M_{A'A}^{a}\otimes M_{BB'}^{b}]\\
  =&\Tr [\tau_{x}\otimes\rho_{AB}\otimes\tau_{y}\cdot  \mathcal{E}^{A'A*}_{i}\otimes\mathcal{F}^{BB'*}_{j}(M_{A'A}^{a}\otimes M_{BB'}^{b})]\\
  =&\Tr (\tau_{x}\otimes\rho_{AB}\otimes\tau_{y}\cdot  \hat{M}_{A'A}^{a(i)}\otimes \hat{M}_{BB'}^{b(j)})
\end{split}
\end{equation}
where $\mathcal{E}^{A'A*}_{i}$ and $\mathcal{F}^{BB'*}_{j}$ are the adjoint CP maps of $\mathcal{E}^{A'A}_{i}$ and $\mathcal{F}^{BB'}_{j}$, they map POVM operators $M_{A'A}^{a}$ 
and $M_{BB'}^{b}$ into $\hat{M}_{A'A}^{a(i)}$ and $\hat{M}_{BB'}^{b(j)}$, which are also POVM operators. Therefore the correlations generated with such a one-way LOCC can be represented as 
\begin{equation}
    p(a,b|x,y)=\Tr(\hat{\Pi}_{ab}^{i,j}\tau_{x}\otimes\tau_{y}),
\end{equation}
where 
\begin{equation}\label{positivePi}
  \hat{\Pi}_{ab}^{i,j}=\Tr_{AB} (I\otimes \rho_{AB}\otimes I \cdot\hat{M}_{A'A}^{a(i)}\otimes \hat{M}_{BB'}^{b(j)}).
\end{equation}
Eq.~\eqref{positivePi} shows that when $\rho_{AB}$ is quantum,  $\hat{\Pi}_{ab}^{i,j}$ is positive semi-definite for every pair of outcomes $(a,b)$. In such case, the solution $\overline{c_{ab}}$ of SDP \eqref{SDP} will always be zero. Therefore LOCC will not produce false detection when $\rho_{AB}$ is quantum. 
 
\section{Proof of theorem \ref{Th6}}   
Let $\Gamma_{A'}$ be the partial transposition on $A'$.
Then, we prepare the following lemma.
\begin{lemma}\label{LL3}
When the dimensions of systems $A$ and $B$ are $d_{A}=d_{B}=2$, 
any beyond-quantum state $\rho_{AB}^{bq}$ in $\mathcal{S}_{AB}$
can be written with a form
$\rho_{AB}^{bq}= \Gamma_{A}(\rho_1)+\rho_2$, where
$\rho_1$ and $\rho_2$ are positive semi-definite matrices on $\cH_{A'B'}$.
\end{lemma}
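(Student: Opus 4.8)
The plan is to identify the set $\mathcal{S}_{AB}$ of beyond-quantum states with the dual cone $SEP(A:B)^{*}$ and to characterize this dual cone explicitly in the $2\times 2$ case by combining cone duality with the Peres--Horodecki criterion. Write $C_{+}:=\mathcal{L}_{AB}^{+}$ for the cone of positive semi-definite matrices and $C_{\Gamma}:=\Gamma_{A}(\mathcal{L}_{AB}^{+})$ for its image under the partial transpose $\Gamma_{A}$. The goal is to prove $SEP(A:B)^{*}=C_{+}+C_{\Gamma}$, which is exactly the claimed decomposition $\rho_{AB}^{bq}=\rho_{2}+\Gamma_{A}(\rho_{1})$ with $\rho_{1},\rho_{2}\geq 0$ (up to the identification of $\mathcal{H}_{AB}$ with $\mathcal{H}_{A'B'}$ furnished by $d_{A}=d_{A'}$ and $d_{B}=d_{B'}$).

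First I would recall the Peres--Horodecki theorem: in a $2\times 2$ system a positive semi-definite matrix is separable if and only if its partial transpose is again positive semi-definite. In cone language this reads $SEP(A:B)=C_{+}\cap C_{\Gamma}$, since $X\in C_{\Gamma}$ holds exactly when $\Gamma_{A}(X)\geq 0$ (using the involution property $\Gamma_{A}^{2}=\mathrm{id}$). Next I would compute the two relevant dual cones. The PSD cone is self-dual, $C_{+}^{*}=C_{+}$. For $C_{\Gamma}$, I would use that $\Gamma_{A}$ is self-adjoint for the trace inner product, so $\Tr(Y\,\Gamma_{A}(X))=\Tr(\Gamma_{A}(Y)\,X)$; hence $Y\in C_{\Gamma}^{*}$ if and only if $\Gamma_{A}(Y)\geq 0$, i.e. $C_{\Gamma}$ is self-dual as well. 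Then the standard cone identity $(C_{1}\cap C_{2})^{*}=\overline{C_{1}^{*}+C_{2}^{*}}$ yields $SEP(A:B)^{*}=\overline{C_{+}+C_{\Gamma}}$.

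The remaining and only delicate point is discharging the closure. For this I would verify the pointedness condition $C_{+}\cap(-C_{\Gamma})=\{0\}$, which guarantees that the Minkowski sum of the two closed cones is itself closed. Concretely, if $X\geq 0$ and $\Gamma_{A}(X)\leq 0$, then, using that the partial transpose preserves the trace, $\Tr X=\Tr\Gamma_{A}(X)\leq 0$, while also $\Tr X\geq 0$; this forces $\Tr X=0$ and hence $X=0$. Therefore $SEP(A:B)^{*}=C_{+}+C_{\Gamma}$, and every $\rho_{AB}^{bq}\in\mathcal{S}_{AB}$ admits the decomposition $\rho_{AB}^{bq}=\Gamma_{A}(\rho_{1})+\rho_{2}$ with $\rho_{1},\rho_{2}\geq 0$.

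I expect the main obstacle to be the careful handling of duals and closures of cones rather than any hard estimate: in particular, justifying $(C_{1}\cap C_{2})^{*}=\overline{C_{1}^{*}+C_{2}^{*}}$ and then removing the closure via the pointedness argument above. The essential mathematical input is simply the Peres--Horodecki criterion, which is available only because $d_{A}=d_{B}=2$; in higher dimensions $SEP(A:B)$ is strictly contained in $C_{+}\cap C_{\Gamma}$ and this clean two-term decomposition no longer holds.
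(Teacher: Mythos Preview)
Your argument is correct. The paper, however, takes a different route: it simply cites Proposition~11 of Ref.~\cite{arai2023detection}, which classifies the extreme rays of $SEP(A:B)^{*}$ in the $2\times 2$ case as the pure product states together with the partial transposes of pure entangled states, and then obtains the decomposition by Minkowski--Carath\'eodory. Your approach instead dualizes the Peres--Horodecki identity $SEP(A:B)=C_{+}\cap C_{\Gamma}$ directly, using $(C_{1}\cap C_{2})^{*}=\overline{C_{1}^{*}+C_{2}^{*}}$, self-duality of both cones, and the pointedness check $C_{+}\cap(-C_{\Gamma})=\{0\}$ to drop the closure. This is a genuinely different and arguably more self-contained derivation: you never need the extremal-ray classification, only the standard closedness lemma for sums of closed cones with trivial negative intersection. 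The paper's route, on the other hand, gives finer structural information (the extreme rays themselves), which is not needed here but is of independent interest.
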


The proof of lemma \ref{LL3} is essentially shown in Proposition 11 of Ref.~\cite{arai2023detection}, where it is proved in $2\times 2$-dimensional case, the set of all extremal beyond-quantum states is given as $\{\Gamma(\rho)|\rho \text{ is pure entangled quantum states}\}\cup\{\rho|\rho \text{ is pure quantum states}\}$. 

Now we come to the proof of theorem \ref{Th6}.

\emph{Proof of theorem \ref{Th6}.} 
%We show the statement by contradiction. 
%That is, 
As the assumption, we choose 
a matrix $Z\in\cal{Z}$ 
such that $\Tr(Z\tau_{x,A'})=0$ for all input states $\tau_{x,A'}$.
Matrix $Z$ can be represented as $Z=U \sigma_2 U^\dagger $ with a properly chosen local unitary $U$ on $\cH_{A'}$
 (note that we denote Pauli matrices $I$,$\sigma_{x}$,$\sigma_{y}$,$\sigma_{z}$ by $\sigma_{i}\ (i=0,1,2,3)$ for simplicity).
Given a beyond-quantum state $\rho_{AB}^{bq}\in \mathcal{S}_{AB}$
and measurement operators $M_{A'A}^{a}$ and $M_{BB'}^{b}$,
a corresponding effective state 
$\Pi_{ab}\in \cal{L}_{A'B'}$ is defined via Eq.~\eqref{piab}.
In our MDI protocol, we use SDP \eqref{SDP} to find the least negative Hermitian matrix $X_{ab}$ satisfying
\begin{equation}\label{Pcond}
\Tr(\tau_{x,A'}\otimes \tau_{y,B'} X_{ab})=\Tr(\tau_{x,A'}\otimes \tau_{y,B'} \Pi_{ab})
\end{equation}
for all $\tau_{x,A'}$ and $\tau_{y,B'}$.
If for any possible effective state $\Pi_{ab}$ generated by $\rho_{AB}^{bq}$, there exists a positive semi-definite matrix $X_{ab}$ satisfying condition \eqref{Pcond}, the beyond-quantumness of $\rho_{AB}^{bq}$ can not be detected. 

When $\Pi_{ab}$ is beyond-quantum, 
Lemma \ref{LL3} guarantees that
there exist two positive semi-definite matrices $W_{ab,1}$ and 
$W_{ab,2}$ on $\cH_{A'B'}$ such that
$(U^\dagger\otimes I_B)\Pi_{ab} (U\otimes I_B)= \Gamma_{A}(W_{ab,1})+W_{ab,2}$.
The positive semi-definite matrix $W_{ab,1}$ can be decomposed 
as
\begin{equation}\Label{rhoq}
W_{ab,1}
=\sum_{kj=0}^{3} g_{kj}\sigma_{k}\otimes\sigma_{j}.
\end{equation}
with real coefficients $g_{ij}$. 
Hence, 
\begin{equation}\Label{rhoq}
\Gamma_{A}(W_{ab,1})
=\sum_{k=0,1,3}\sum_{j=0}^{3} g_{kj}\sigma_{k}\otimes\sigma_{j}
-2\sum_{j=0}^{3} g_{2j}\sigma_{2}\otimes\sigma_{j}.
\end{equation}
Then, 
the Hermitian matrix $\hat{\Pi}_{ab}:=\Pi_{ab}
+2(U \otimes I_B)\sum_{j=0}^{3} g_{2j}\sigma_{2}\otimes\sigma_{j} (U^\dagger\otimes I_B)$ is shown to be positive semi-definite as follows.
\begin{align*}
&(U^\dagger \otimes I_B)\hat{\Pi}_{ab}(U \otimes I_B) \\
= &\Gamma_{A}(W_{ab,1})
+2 \sum_{j=0}^{3} g_{2j}\sigma_{2}\otimes\sigma_{j}+W_{AB,2}\\
=& W_{AB,1}+W_{AB,2} \ge 0.
\end{align*}
Since 
\begin{align*}
&\Tr (\tau_{x,A'}\otimes \tau_{y,B'})
2(U \otimes I_B)\sum_{j=0}^{3} g_{2j}\sigma_{2}\otimes\sigma_{j} (U^\dagger\otimes I_B)\\
=&
2 \sum_{j=0}^{3} g_{2j}
(\Tr   \tau_{x,A'} U \sigma_{2}U^\dagger)
(\Tr \tau_{y,B'} \sigma_{j}) \\
=&
2 \sum_{j=0}^{3} g_{2j}
(\Tr   \tau_{x,A'} Z)
(\Tr \tau_{y,B'} \sigma_{j}) 
=0,
\end{align*}
we have
\begin{align}
\Tr(\tau_{x,A'}\otimes \tau_{y,B'} \hat{\Pi}_{ab})
=
\Tr(\tau_{x,A'}\otimes \tau_{y,B'} \Pi_{ab})
\end{align}
for all $\tau_{x,A'}$ and $\tau_{y,B'}$.
Therefore, the beyond-quantum state $\rho_{AB}^{bq}\in \mathcal{S}_{AB}$ can not be detected by the MDI protocol in this case.
\qed

%\bibliographystyle{IEEEtran}
%\bibliographystyle{apsrev4-1} 
%\bibliography{ref}% common bib file

\end{document}